\newcommand{\norm}[1]{||{#1}||}
\DeclareMathOperator*{\argmin}{arg\,min}
\newcommand{\prox}{\operatorname{Prox}}
\newcommand{\id}{\operatorname{Id}}
\newcommand{\dom}{\operatorname{dom}}
\renewcommand{\Im}{\operatorname{Im}}
\newtheorem{assumption}{Assumption}
\newtheorem{lemma}{Lemma}
\newtheorem{remark}{Remark}
\newtheorem{thm}{Theorem}
\newtheorem{definition}{Definition}
\newtheorem{proposition}{Proposition}
\newcommand{\sh}[1]{{{\color{red!50!black}{#1}}}}
\newcommand{\np}[1]{{\color{blue}{\bf NP:} #1}}
\title{Convergent Bregman Plug-and-Play Image Restoration for Poisson Inverse Problems}
\author{%
  Samuel Hurault \\
  Univ. Bordeaux, Bordeaux INP, CNRS, IMB \\
  \texttt{samuel.hurault@math.u-bordeaux.fr} \\
  \And
    Ulugbek Kamilov \\
    Washington University in St. Louis \\
    \texttt{kamilov@wustl.edu} \\
  \And
  Arthur Leclaire \\
  Univ. Bordeaux, Bordeaux INP, CNRS, IMB \\
  \texttt{arthur.leclaire@math.u-bordeaux.fr} \\
  \And
  Nicolas Papadakis \\
  Univ. Bordeaux, Bordeaux INP, CNRS, IMB \\
  \texttt{nicolas.papadakis@math.u-bordeaux.fr} \\
}
\begin{document}

\maketitle

\begin{abstract}
Plug-and-Play (PnP) methods are efficient iterative algorithms for solving ill-posed image inverse problems. PnP methods are obtained by using deep Gaussian denoisers instead of the proximal operator or the gradient-descent step within proximal algorithms. Current PnP schemes rely on data-fidelity terms that have either Lipschitz gradients or closed-form proximal operators, which is not applicable to Poisson inverse problems. Based on the observation that the Gaussian noise is not the adequate noise model in this setting, we propose to generalize PnP using the Bregman Proximal Gradient (BPG) method. BPG replaces the Euclidean distance with a Bregman divergence that can better capture the smoothness properties of the problem. We introduce the Bregman Score Denoiser specifically parametrized and trained for the new Bregman geometry and prove that it corresponds to the proximal operator of a nonconvex potential. We propose two PnP algorithms based on the Bregman Score Denoiser for solving Poisson inverse problems. Extending the convergence results of BPG in the nonconvex settings, we show that the proposed methods converge, targeting stationary points of an explicit global functional. Experimental evaluations conducted on various Poisson inverse problems validate the convergence results and showcase effective restoration performance.

\end{abstract}

\section{Introduction}

Ill-posed image inverse problems are classically formulated with a minimization problem of the form 
\begin{equation} \label{eq:problem}
    \argmin_{x \in \mathbb{R}^n} \lambda f(x) + g(x)
\end{equation}
where $f$ is a data-fidelity term, $g$ a regularization term,
and $\lambda>0$ a regularization parameter.  
The data-fidelity term is generally written as the negative log-likelihood $f(x) = -\log p(y|x)$ of the probabilistic observation model chosen to describe the physics of an acquisition $y \in \mathbb{R}^m$ from linear measurements $Ax$ of an image $x \in  \mathbb{R}^n$. In applications such as Positron Emission Tomography (PET) or astronomical CCD cameras~\citep{bertero2009image}, where images are obtained by counting particles (photons or electrons), it is common to use the Poisson noise model $y \sim \mathcal{P}(\alpha Ax)$ with parameter $\alpha>0$. The corresponding negative log-likelihood corresponds to the Kullback-Leibler divergence
\begin{equation}\label{eq:poisson_f}
    f(x) =  \sum_{i=1}^m y_i \log\left(\frac{y_i}{ \alpha(Ax)_i}\right) + \alpha(Ax)_i - y_i.
\end{equation}
The minimization of~\eqref{eq:problem} can be addressed with proximal splitting algorithms~\citep{CombettesPesquet}. Depending on the properties of the functions $f$ and $g$, they consist in alternatively evaluating the proximal operator and/or performing a gradient-descent step on $f$ and $g$. 


Plug-and-Play (PnP)~\citep{venkatakrishnan2013plug} and Regularization-by-Denoising (RED)~\citep{romano2017little} methods build on proximal splitting algorithms by replacing the proximal or gradient descent updates with off-the-shelf Gaussian denoising operators, typically deep neural networks trained to remove Gaussian noise. The intuition behind PnP/RED is that the proximal (resp.\ gradient-descent) mapping of $g$ writes as the maximum-a-posteriori (MAP) (resp.\ posterior mean) estimation from a Gaussian noise observation, under prior $p = \exp(-g)$.
A remarkable property of these priors is that they are decoupled from the degradation model represented by $f$ in the sense that one learned prior $p$ can serve as a regularizer for many inverse problems.
When using deep denoisers corresponding to exact gradient-step~\citep{hurault2021gradient,cohen2021has} or proximal~\citep{hurault2022proximal} maps, RED and PnP methods become real optimization algorithms with known convergence guarantees.

However, for Poisson noise the data-fidelity term~\eqref{eq:poisson_f} is neither smooth with a Lipschitz  gradient, nor proximable for $A \neq \id$ (meaning that its proximal operator cannot be computed in a closed form), which limits the use of standard splitting procedures. 
\citet{bauschke2017descent} addressed this issue by introducing a Proximal Gradient Descent (PGD) algorithm in the Bregman divergence paradigm, called Bregman Proximal Gradient~(BPG). 
 The benefit of BPG is that the smoothness condition on $f$ for sufficient decrease of PGD is replaced by the "NoLip" condition "$Lh-f$ is convex" for a convex potential $h$. 
For instance, \citet{bauschke2017descent} show that the data-fidelity term~\eqref{eq:poisson_f} satisfies the NoLip condition for the Burg's entropy $h(x) = - \sum_{i=1}^n \log(x_i)$.

Our primary goal  is to exploit the BPG algorithm for minimizing~\eqref{eq:problem} using PnP and RED priors. 
This requires an interpretation of the plug-in denoiser as a Bregman proximal operator.
\begin{equation}\label{eq:prox}
    \prox^h_g(y) = \argmin_x g(x) + D_h(x,y),
\end{equation}
where $D_h(x,y)$ is the Bregman divergence associated with $h$. 
In Section~\ref{sec:bregman_denoising}, we show that by selecting a suitable noise model, other than the traditional Gaussian one, the MAP denoiser can be expressed as a Bregman proximal operator.
Remarkably, the corresponding noise distribution belongs to an exponential family, which allows for a closed-form posterior mean (MMSE) denoiser generalizing the Tweedie's formula. 
Using this interpretation, we derive a RED prior tailored to the Bregman geometry.
By presenting a prior compatible with the noise model, we highlight the limitation of the decoupling between prior and data-fidelity suggested in the existing PnP literature.

In order to safely use our MAP and MMSE denoisers in the BPG method, we introduce 
the Bregman Score denoiser, which generalizes the Gradient Step denoiser from~\citep{hurault2021gradient, cohen2021has}. Our denoiser provides an approximation of the log prior of the noisy distribution of images.  Moreover, based on the characterization of the Bregman Proximal operator from~\citet{gribonval2020characterization}, we  show under mild conditions that our denoiser can be expressed as the Bregman proximal operator of an explicit nonconvex potential.   

In Section~\ref{sec:bregman_PnP}, we use the Bregman Score Denoiser within RED and PnP methods and propose the B-RED and B-PnP algorithms.  Elaborating on the results from \citet{bolte2018first} for the BPG algorithm in the nonconvex setting, we demonstrate that RED-BPG and PnP-BPG are guaranteed to converge towards stationary points of an explicit functional. We finally show in Section~\ref{sec:application} the relevance of the proposed framework in the context of Poisson inverse problems.

\section{Related Works}

\noindent{\bf Poisson Inverse Problems}
A variety of methods have been proposed to solve Poisson image restoration from the Bayesian variational formulation~\eqref{eq:problem} with Poisson data-fidelity term~\eqref{eq:poisson_f}. Although $f$ is convex, this is a challenging optimization problem as $f$ is non-smooth and does not have a closed-form proximal operator for $A\neq \id$, thus precluding the direct application of splitting algorithms such as Proximal Gradient Descent or ADMM. Moreover, we wish to regularize~\eqref{eq:problem} with an explicit denoising prior, which is generally nonconvex.  
PIDAL \citep{figueiredo2009deconvolution, figueiredo2010restoration} and related methods  \citep{ng2010solving,setzer2010deblurring} solve~\eqref{eq:poisson_f} using modified versions of the alternating direction method of multipliers (ADMM). As $f$ is proximable when $A=\id$, the idea is to add a supplementary constraint in the minimization problem and to adopt an augmented Lagrangian framework. \citet{figueiredo2010restoration} prove the convergence of their algorithm with convex regularization. However, no convergence is established for nonconvex regularization. 
\citet{boulanger2018nonsmooth} adopt the primal-dual PDHG algorithm which also splits $A$ from the $f$ update. Once again, there is not convergence guarantee for the primal-dual algorithm with nonconvex regularization.  

\noindent{\bf Plug-and-Play (PnP)}
PnP methods were successfully used for solving a variety of IR tasks by including deep denoisers into different optimization algorithms, including Gradient Descent \citep{romano2017little}, Half-Quadratic-Splitting \citep{zhang2017learning,zhang2021plug}, ADMM \citep{ryu2019plug,sun2021scalable}, and PGD \citep{Kamilov.etal2017, terris2020building}. 
Variations of PnP have been proposed to solve Poisson inverse problems. 
\citet{rond2016poisson} use PnP-ADMM and approximates, at each iteration, the non-tractable proximal operator of the data-fidelity term~\eqref{eq:poisson_f} with an inner minimization procedure. \cite{sanghvi2022photon} propose a PnP version of the PIDAL algorithm which is also unrolled for image deconvolution. 
Theoretical convergence of PnP algorithms with deep denoisers has recently been addressed by a variety of studies~\citep{ryu2019plug,sun2021scalable,terris2020building} (see also a review in~\cite{Kamilov.etal2023}). Most of these works require non-realistic or sub-optimal constraints on the deep denoiser, such as nonexpansiveness. 
More recently, convergence was addressed by making PnP genuine optimization algorithms again. This is done by building deep denoisers as exact gradient descent operators (\emph{Gradient-Step denoiser})~\citep{cohen2021has,hurault2021gradient} or exact proximal operators \citep{hurault2022proximal, hurault2023relaxed}. These PnP algorithms thus minimize an explicit functional~\eqref{eq:poisson_f} with an explicit (nonconvex) deep regularization.

\noindent{\bf Bregman optimization}
\cite{bauschke2017descent} replace the smoothness condition of PGD by the NoLip assumption~\eqref{eq:nolip}  with a Bregman generalization of PGD called Bregman Proximal Gradient (BPG). In the nonconvex setting, \cite{bolte2018first} prove global convergence of the algorithm to a critical point of~\eqref{eq:problem}. The analysis however requires assumptions that are not verified by the Poisson data-fidelity term and Burg's entropy Bregman potential. \cite{al2022bregman} considered unrolled Bregman PnP and RED algorithms, but without any theoretical convergence analysis. Moreover, the interaction between the data fidelity, the Bregman potential, and the denoiser was not explored.

\section{Bregman denoising prior}
\label{sec:bregman_denoising}

The overall objective of this work is to efficiently solve ill-posed image restoration (IR) problems involving a data-fidelity term $f$ verifying the NoLip assumption for some convex potential $h$ 
\begin{equation}\label{eq:nolip} \textbf{NoLip} \hspace{0.5cm} \text{There is $L>0$ such that $Lh-f$ is convex on $int \dom h$}.\end{equation}
PnP provides an elegant framework for solving ill-posed inverse problems with a denoising prior. However, the intuition and efficiency of PnP methods inherit from the fact that Gaussian noise is  well suited for the Euclidean $L_2$ distance, the latter naturally arising in the MAP formulation of the Gaussian denoising problem. When the Euclidean distance is replaced by a more general Bregman divergence, the noise model needs to be adapted accordingly for the prior.

In Section~\ref{ssec:tweedie}, we first discuss the choice of the noise model associated to a Bregman divergence, leading to Bregman formulations of the MAP and MMSE estimators. Then we introduce in Section~\ref{ssec:bregman_score_denoiser} the Bregman Score Denoiser that will be used to regularize the inverse problem~\eqref{eq:problem}. %


\noindent
{\bf Notations} For convenience, we assume throughout our analysis that the convex potential $h : C_h \subseteq \mathbb{R}^n \to \mathbb{R} \cup \{+\infty\}$ is $\mathcal{C}^2$ and of Legendre type (definition in Appendix~\ref{app:def_legendre_bregman}). Its convex conjugate $h^*$ is then also $\mathcal{C}^2$ of Legendre type. $D_h(x,y)$ denotes its associated Bregman divergence 
\begin{equation} \label{eq:bregman_div}
    \hspace{-3pt}D_h \hspace{-1pt}:\hspace{-1pt} \mathbb{R}^n \times int \dom h\hspace{-1pt} \to \hspace{-1pt}[0,+\infty]\hspace{-1pt} :\hspace{-1pt} (x,y)\hspace{-1pt} \to\hspace{-1pt}
    \left\{\begin{array}{ll}
    \hspace{-3pt}h(x)\hspace{-1pt} -\hspace{-1pt}h(y)\hspace{-1pt} - \hspace{-1pt}\langle \nabla h(y),x-y \rangle &\hspace{-1pt}\text{if} \ x \in \dom(h) \\
    \hspace{-3pt}+\infty &\hspace{-1pt}\text{otherwise}.
\end{array}\right.
\end{equation}    
\subsection{Bregman noise model}
\label{ssec:tweedie}
We consider the following observation noise model, referred to as \emph{Bregman noise}\footnote{The Bregman divergence being non-symmetric, the order of the variables $(x,y)$ in $D_h$ is important. Distributions of the form~\eqref{eq:noise_model}
with reverse order in $D_h$ have been characterized in~\citep{banerjee2005clustering} but this analysis does not apply here.}, 
\begin{equation} \label{eq:noise_model}
  \text{for $x,y \in \dom(h) \times  int \dom(h)$}  \ \ \ \  p(y|x) := \exp\left( - \gamma D_h(x,y) + \rho(x) \right). 
\end{equation}
We assume that there is $\gamma>0$ and a normalizing function $\rho : \dom(h) \to \mathbb{R}$ such that the expression~\eqref{eq:noise_model} defines a probability measure. For instance, for $h(x) = \frac{1}{2}\norm{x}^2$, $\gamma = \frac{1}{\sigma^2}$ and $\rho = 0$, we retrieve the Gaussian noise model with variance $\sigma^2$. 
As shown in Section \ref{sec:application}, for $h$ given by Burg's entropy, $p(y|x)$ corresponds to a multivariate Inverse Gamma ($\mathcal{IG}$) distribution.

Given a noisy observation $y \in int \dom(h)$, \emph{i.e.} a realization of a random variable $Y$ with conditional probability $p(y|x)$, we now consider  two optimal estimators of $x$, the MAP and the posterior mean. 

\noindent{\bf Maximum-A-Posteriori (MAP) estimator}
The MAP denoiser selects the mode of the a-posteriori probability distribution $p(x|y)$. Given the prior $p_X$, it writes
\begin{align}\label{MAP_prox}
    \hspace{-1pt}\hat x_{MAP}(y) \hspace{-1pt}= \hspace{-1pt}\argmin_x \hspace{-1pt}-\hspace{-1pt}\log p(x|y) \hspace{-1pt}=\hspace{-1pt} \argmin_x\hspace{-1pt}  -\hspace{-1pt}  \log p_X(x) \hspace{-1pt}- \hspace{-1pt}\log p(y|x)
    \hspace{-1pt}= \hspace{-1pt}\prox^h_{ -\frac{1}{\gamma}(\rho+\log p_X)}(y).\hspace{-1pt}
\end{align}
Under the Bregman noise model \eqref{eq:noise_model}, the MAP denoiser writes as the \emph{Bregman proximal operator} (see relation~\eqref{eq:prox}) of $-\frac{1}{\gamma}(\log p_X + \rho)$. This  acknowledges for the fact that the introduced Bregman noise is the adequate noise model for generalizing PnP methods within the Bregman framework.


\noindent{\bf Posterior mean (MMSE) estimator} The MMSE denoiser is the expected value of the posterior probability distribution and the optimal Bayes estimator for the $L_2$ score. Note that our Bregman noise conditional probability~\eqref{eq:noise_model} belongs to the regular \emph{exponential family of distributions}
\begin{equation}
    \begin{split}
       p(y|x) = p_0(y)\exp \left( \langle x,T(y) \rangle - \psi(x) \right)
    \end{split}
\end{equation}
with $T(y) = \gamma\nabla h(y)$, ${\psi(x) = \gamma h(x) - \rho(x)}$ and ${p_0(y) = \exp \left( \gamma h(y) - \gamma \langle \nabla h(y),y \rangle\right)}$. It is shown in \citep{efron2011tweedie} (for $T=\id$ and generalized in \citep{kim2021noise2score} for $T \neq \id$) that the corresponding posterior mean estimator verifies a generalized Tweedie formula $\nabla T(y).\hat x_{MMSE}(y) = - \nabla \log p_0(y) + \nabla \log p_Y(y)$, which translates to (see Appendix~\ref{app:bregman_denoiser} for details)
\begin{align} \label{eq:post_mean}
 \hat x_{MMSE}(y) = \mathbb{E}[x|y] = y - \frac{1}{\gamma} (\nabla^2 h(y))^{-1}. \nabla (-\log p_Y)(y) .
\end{align}
Note that for the Gaussian noise model, we have $h(x)=\frac12||x||^2$, $\gamma = 1/\sigma^2$ and \eqref{eq:post_mean} falls back to the more classical Tweedie formula of the Gaussian posterior mean denoiser  $\hat x = y - \sigma^2 \nabla (-\log p_Y)(y)$. Therefore, given an off-the-shelf "Bregman denoiser" $\mathcal{B}_\gamma$ specially devised to remove Bregman noise~\eqref{eq:noise_model} of level $\gamma$, if the denoiser approximates the posterior mean $\mathcal{B}_\gamma(y) \approx \hat x_{MMSE}(y)$, then it provides an approximation of the score
  $ - \nabla \log p_Y(y) \approx \gamma \nabla^2 h(y) . \left( y - \mathcal{B}_\gamma(y) \right)$.
 

\subsection{Bregman Score Denoiser}
\label{ssec:bregman_score_denoiser}
Based on previous observations, we propose to define a denoiser following the form of the MMSE~\eqref{eq:post_mean}
\begin{equation} \label{eq:bregman_score_denoiser}
    \mathcal{B}_\gamma(y) = y - (\nabla^2 h(y))^{-1}. \nabla g_\gamma(y) ,
\end{equation}
with $g_\gamma : \mathbb{R}^n \to \mathbb{R}$ a nonconvex potential parametrized by a neural network.
When $\mathcal{B}_\gamma$ is trained as a denoiser for the associated Bregman noise \eqref{eq:noise_model} with $L_2$ loss, it  approximates the optimal estimator for the $L_2$ score, precisely the MMSE~\eqref{eq:post_mean}. Comparing \eqref{eq:bregman_score_denoiser} and \eqref{eq:post_mean}, we get 
${\nabla g_\gamma \approx -\frac{1}{\gamma} \nabla \log p_Y}$, \emph{i.e.} the score is properly approximated with an explicit conservative vector field. We refer to this denoiser as the \emph{Bregman Score denoiser}. Such a denoiser corresponds to the Bregman generalization of the Gaussian Noise Gradient-Step denoiser proposed in \citep{hurault2021gradient,cohen2021has}.

\noindent{\bf Is the Bregman Score Denoiser a Bregman proximal operator?} 
We showed in relation~\eqref{MAP_prox}  that the optimal Bregman MAP denoiser $\hat x_{MAP}$ is a Bregman proximal operator. We want to generalize this property to our Bregman denoiser~\eqref{eq:bregman_score_denoiser}. When trained with $L_2$ loss, the denoiser should approximate the MMSE rather than the MAP. For Gaussian noise, \citet{gribonval2011should} re-conciliates the two views by showing that the Gaussian MMSE denoiser actually writes as an Euclidean proximal operator. 

Similarly, extending the characterization from~\citep{gribonval2020characterization} of Bregman proximal operators, we now prove that, under some convexity conditions, the proposed Bregman Score Denoiser~\eqref{eq:bregman_score_denoiser} explicitly writes as the Bregman proximal operator of a nonconvex potential.

\begin{proposition}[Proof in Appendix~\ref{app:proof_prox_bregman}] \label{prop:prox_bregman}
Let $h$ be $\mathcal{C}^2$ and of Legendre type. Let $g_\gamma :  \mathbb{R}^n \to \mathbb{R} \cup \{+\infty\}$ proper and differentiable and $\mathcal{B}_\gamma(y) : int \dom(h) \to \mathbb{R}^n$ defined in~\eqref{eq:bregman_score_denoiser}. Assume ${\Im(\mathcal{B}_\gamma) \subset int \dom(h)}$. 
With $\psi_\gamma :  \mathbb{R}^n \to \mathbb{R} \cup \{+\infty\}$ defined by 
\begin{equation} \label{eq:psi_gamma}
\psi_\gamma(y) = \left\{\begin{array}{ll}
-h(y) + \langle \nabla h(y),y \rangle - g_\gamma(y)&\text{if} \ y \in int \dom(h) \\
+\infty &\text{otherwise}.
\end{array}\right.
\end{equation}
suppose that  $\psi_\gamma \circ \nabla h^*$ is convex on $int \dom(h^*)$. Then for $\phi_\gamma : \mathbb{R}^n \to \mathbb{R}\cup \{ +\infty \}$ defined by 
\begin{equation} \label{eq:eq_psi_phi_denoiser}
\phi_\gamma(x) := \left\{\begin{array}{ll}
   g_\gamma(y) - D_h(x,y) \ \ \text{for} \ y \in \mathcal{B}_\gamma^{-1}(x) \  &\text{if} \ \ x \in \Im(\mathcal{B}_\gamma) \\
   +\infty  \ \ &\text{otherwise}
\end{array}\right.
\end{equation}
we have that for each $y \in int \dom(h)$
\begin{equation} \label{eq:prox_property}
    \mathcal{B}_\gamma(y) \in \argmin_{x \in \mathbb{R}^n} \{ D_h(x,y) + \phi_\gamma(x) \}\vspace*{-0.2cm}
\end{equation}
\end{proposition}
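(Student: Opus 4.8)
The plan is to reduce the Bregman proximal identity~\eqref{eq:prox_property} to the elementary subgradient inequality of the convex function $\Phi := \psi_\gamma \circ \nabla h^*$ on $int \dom h^*$, whose convexity is exactly the standing hypothesis. First I would collect the standard facts attached to $h$ being $\mathcal{C}^2$ of Legendre type: $\nabla h : int \dom h \to int \dom h^*$ is a $\mathcal{C}^1$ diffeomorphism with inverse $\nabla h^*$, and $\nabla^2 h^*(\nabla h(y)) = (\nabla^2 h(y))^{-1}$ for all $y \in int \dom h$. I would also rewrite the potential~\eqref{eq:psi_gamma}: on $int \dom h$ one has $-h(y) + \langle \nabla h(y), y \rangle = h^*(\nabla h(y))$, so $\psi_\gamma(y) = h^*(\nabla h(y)) - g_\gamma(y)$ and therefore $\Phi(u) = h^*(u) - g_\gamma(\nabla h^*(u))$ on $int \dom h^*$, which is $\mathcal{C}^1$ there since $g_\gamma$ is differentiable.

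The pivotal step is the identity $\nabla \Phi = \mathcal{B}_\gamma \circ \nabla h^*$. Writing $y = \nabla h^*(u)$ and differentiating, $\nabla \Phi(u) = \nabla^2 h^*(u)\, \nabla \psi_\gamma(y)$ with $\nabla \psi_\gamma(y) = \nabla^2 h(y)\, y - \nabla g_\gamma(y)$; substituting $\nabla^2 h^*(u) = (\nabla^2 h(y))^{-1}$ gives $\nabla \Phi(u) = y - (\nabla^2 h(y))^{-1} \nabla g_\gamma(y) = \mathcal{B}_\gamma(y)$, precisely~\eqref{eq:bregman_score_denoiser}. This is the Bregman counterpart of the remark that an MMSE-type denoiser is, up to the mirror map $\nabla h^*$, the gradient of a convex function, and it is the place where the special parametrization of the Bregman Score Denoiser is used.

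Then I would change variables in the objective. Fix $y \in int \dom h$ and set $u = \nabla h(y)$; for any $x \in \dom h$ one has $D_h(x,y) = h(x) - \langle u, x \rangle + h^*(u)$. If moreover $x \in \Im(\mathcal{B}_\gamma)$, write $x = \mathcal{B}_\gamma(w) = \nabla \Phi(v)$ with $v = \nabla h(w)$; then $g_\gamma(w) = h^*(v) - \Phi(v)$ by the rewriting of $\Phi$, and a one-line computation gives $\phi_\gamma(x) = -\Phi(v) + \langle v, x \rangle - h(x)$. In passing this shows $\phi_\gamma$ is well defined on $\Im(\mathcal{B}_\gamma)$: if $\nabla \Phi(v_1) = \nabla \Phi(v_2) = x$, the equality case in the subgradient inequality for $\Phi$ forces $\Phi(v_1) - \langle v_1, x \rangle = \Phi(v_2) - \langle v_2, x \rangle$. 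Plugging these expressions into $D_h(x,y) + \phi_\gamma(x)$ and subtracting the value at $\mathcal{B}_\gamma(y) = \nabla \Phi(u)$, the $h(x)$ and $h^*(u)$ terms disappear and the sought inequality $D_h(\mathcal{B}_\gamma(y),y) + \phi_\gamma(\mathcal{B}_\gamma(y)) \le D_h(x,y) + \phi_\gamma(x)$ — trivial when $x \notin \Im(\mathcal{B}_\gamma)$, where $\phi_\gamma(x) = +\infty$ — becomes exactly $\Phi(u) \ge \Phi(v) + \langle \nabla \Phi(v), u - v \rangle$, which holds since $\Phi$ is convex.

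I expect the only genuine difficulty to be bookkeeping: managing the three coupled points $y$, $u = \nabla h(y)$, and the dual point $v$, and verifying that the Legendre-type hypotheses together with the assumption $\Im(\mathcal{B}_\gamma) \subset int \dom h$ make every term above well defined ($\Phi$ of class $\mathcal{C}^1$ on $int \dom h^*$, the Hessian-inverse identity, $D_h$ finite at the relevant pairs). Once $\Phi$ is identified and $\nabla \Phi = \mathcal{B}_\gamma \circ \nabla h^*$ established, no further analysis is needed: the claim is literally the gradient inequality for $\Phi$. For $h = \tfrac12 \norm{\cdot}^2$ this recovers the Gribonval characterization of the Gaussian MMSE denoiser as a Euclidean proximal operator.
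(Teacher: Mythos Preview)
Your proof is correct and follows essentially the same route as the paper: identify $\Phi = \psi_\gamma \circ \nabla h^*$, verify $\nabla \Phi \circ \nabla h = \mathcal{B}_\gamma$, and reduce the Bregman-proximal inequality to the first-order convexity inequality for $\Phi$. The paper packages this as the $(ii)\Rightarrow(i)$ direction of a more abstract characterization (a Gribonval--Nikolova-style bi-implication for divergences of the form $D(x,y)=a(y)-\langle x,A(y)\rangle+b(x)$), but the computation actually invoked is exactly yours, including the well-definedness argument for $\phi_\gamma$ via the equality case of the subgradient inequality.
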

\begin{remark}
Note that the order in the Bregman divergence is important in order to fit the definition of the Bregman proximal operator~\eqref{eq:prox}. In this order, \citep[Theorem 3]{gribonval2020characterization} does not directly apply. We propose instead in Appendix~\ref{app:proof_prox_bregman} a new version of their main theorem.
\end{remark} 
This proposition generalizes the result of~\citep[Prop. 1]{hurault2022proximal} to any Bregman geometry, which proves that the  Gradient-Step Gaussian denoiser writes as an Euclidean proximal operator when $\psi_\gamma \circ \nabla h^*(x) = \frac{1}{2}\norm{x}^2 - g_\gamma(x)$ is convex. More generally, as exhibited for Poisson inverse problems in Section~\ref{sec:application}, such a convexity condition translates to a constraint on the deep potential $g_\gamma$. 

To conclude, the Bregman Score Denoiser provides, via \emph{exact} gradient or proximal mapping, two distinct explicit nonconvex priors $g_\gamma$ and $\phi_\gamma$ that can be used for subsequent PnP image restoration. 

\section{Plug-and-Play (PnP) image restoration with Bregman Score Denoiser}\label{sec:bregman_PnP}

We now regularize inverse problems with the explicit prior provided by the Bregman Score Denoiser~\eqref{eq:bregman_score_denoiser}. Properties of the Bregman Proximal Gradient (BPG) algorithm are recalled in Section~\ref{ssec:BPG}. We show the convergence of our two B-RED and B-PnP algorithms in Sections~\ref{ssec:B_RED} and~\ref{ssec:B_PnP}.
\subsection{Bregman Proximal Gradient (BPG) algorithm}\label{ssec:BPG}
Let $F$ and $\mathcal{R}$ be two proper and lower semi-continuous functions with $F$ of class $\mathcal{C}^1$ on $int \dom(h)$. 
\citet{bauschke2017descent} propose to minimize $\Psi = F + \mathcal{R}$ using the following BPG algorithm:
\begin{equation}\label{eq:BPG}
        x^{k+1} \in \argmin_{x \in \mathbb{R}^n} \{ \mathcal{R}(x) + \langle x-x^k, \nabla F(x^k) \rangle 
 + \frac{1}{\tau} D_h(x,x^k) \} .
\end{equation}
Recalling the general expression of proximal operators defined in relation~\eqref{eq:prox}, when ${\nabla h(x_k) - \tau \nabla F(x_k) \in \dom(h^*)}$,  the previous iteration can be written as (see Appendix~\ref{app:BPG_algorithm})
\begin{align} \label{eq:BPG_split}
        x^{k+1} &\in  \prox^h_{\tau \mathcal{R}} \circ \nabla h^*(\nabla h - \tau \nabla F)(x_k).
\end{align}
With formulation~\eqref{eq:BPG_split}, the BPG algorithm  generalizes the Proximal Gradient Descent (PGD) algorithm in a different geometry defined by $h$. 

\noindent{\bf Convergence of BPG} If $F$ verifies the NoLip condition~\eqref{eq:nolip} for some $L_F>0$ and if $\tau < \frac{1}{L_F}$, one can prove that the objective function $\Psi$ decreases along the iterates~\eqref{eq:BPG_split}. Global convergence of the iterates for nonconvex $F$ and $\mathcal{R}$ is also shown in~\citep{bolte2018first}. However,~\citet{bolte2018first} take assumptions on $F$ and $h$ that are not satisfied in the context of Poisson inverse problems. For instance, $h$ is assumed strongly convex on the full domain $\mathbb{R}^n$
which is not satisfied by Burg's entropy. Additionally, $F$ is assumed to have Lipschitz-gradient on bounded subset of $\mathbb{R}^n$, which is not true for $F$ the Poisson data-fidelity term~\eqref{eq:poisson_f}. Following the same structure of their proof, we extend in Appendix~\ref{app:convergence_BPG} (Proposition~\ref{prop:BPG_1} and Theorem~\ref{thm:BPG_2}) the convergence theory from~\citep{bolte2018first} with the more general Assumption~\ref{ass:general_ass} below, which is verified for Poisson inverse problems (Appendix~\ref{app:convergence_poisson}).  

\noindent{\bf Application to the IR problem} In what follows, we consider two  variants of BPG for minimizing~\eqref{eq:problem} with gradient updates on the data-fidelity term $f$. 
These algorithms respectively correspond to Bregman generalizations of the RED Gradient-Descent (RED-GD) \citep{romano2017little} and the Plug-and-Play PGD algorithms.
For the rest of this section, we consider the following assumptions
\begin{assumption} \label{ass:general_ass} ~
\begin{itemize}
    \item[(i)] $h : C_h \to \mathbb{R} \cup \{ +\infty \}$ is of class $\mathcal{C}^2$ and of Legendre-type.
    \item[(ii)] $f : \mathbb{R}^n \to \mathbb{R} \cup \{ +\infty \}$  is proper, coercive, of class $\mathcal{C}^1$ on $int \dom(h)$, with $\dom(h) \subset \dom(f)$, and is semi-algebraic.
    \item[(iii)] \textbf{NoLip :} $L_f h - f$ is convex on $int \dom(h)$. 
    \item[(iv)] $h$ is assumed strongly convex on any bounded convex subset of its domain and for all $\alpha>0$, $\nabla h$ and $\nabla f$ are Lipschitz continuous on $\{x \in \dom(h),  \Psi(x) \leq \alpha \}$.
    \item[(v)] $g_\gamma$ given by the Bregman Score Denoiser~\eqref{eq:bregman_score_denoiser} and its associated  $\phi_\gamma$ obtained from Proposition~\ref{prop:prox_bregman} are lower-bounded and semi-algebraic.
\end{itemize}
\end{assumption} 
Even though the Poisson data-fidelity term~\eqref{eq:poisson_f} is convex, our convergence results also hold for more general nonconvex data-fidelity terms.  Assumption (iv) generalizes~\cite[Assumption D]{bolte2018first} and allows to prove global convergence of the iterates. The semi-algebraicity assumption is a sufficient conditions for the Kurdyka-Lojasiewicz (KL) property~\citep{bolte2007lojasiewicz} to be verified. The latter, defined in Appendix~\ref{app:KL} can be interpreted as the fact that, up to a reparameterization, the function is sharp. The KL property is widely used in nonconvex optimization \citep{attouch2013convergence, ochs2014ipiano}. Semi-algebraicity is verified in practice by a very large class of functions and the sum of 
semi-algebraic functions is semi-algebraic. Assumption~\ref{ass:general_ass} thus ensures that $\lambda f + g_\gamma$ and $\lambda f + \phi_\gamma$ verify the KL property.

\subsection{Bregman Regularization-by-Denoising (B-RED)}
\label{ssec:B_RED}

We first generalize the RED Gradient-Descent (RED-GD) algorithm~\citep{romano2017little} in the Bregman framework. Classically, RED-GD is a simple gradient-descent algorithm applied to the functional $\lambda f + g_\gamma$ where the gradient $\nabla g_\gamma$ is assumed to be implicitly given by an image denoiser~$\mathcal{B}_\gamma$ (parametrized by~$\gamma$) via $\nabla g_\gamma = \id - \mathcal{B}_\gamma$. 
Instead, our Bregman Score Denoiser~\eqref{eq:bregman_score_denoiser} provides an explicit regularizing potential $g_\gamma$ whose gradient approximates the score via the Tweedie formula~\eqref{eq:post_mean}. We propose to minimize $F_{\lambda,\gamma} = \lambda f + g_\gamma$ on  $\dom(h)$ using the Bregman Gradient Descent algorithm
\begin{equation}
    x_{k+1} =  \nabla h^*(\nabla h - \tau \nabla F_{\lambda,\gamma})(x_k)
\end{equation}
which writes in a more general version as the BPG algorithm~\eqref{eq:BPG} with $\mathcal{R}=0$
\begin{equation} \label{eq:B-RED}
   x_{k+1} = \argmin_{x \in \mathbb{R}^n} \{ \langle x-x_k, \lambda \nabla f(x_k) + \nabla g_\gamma(x_k) \rangle + \frac{1}{\tau} D_h(x,x_k) \} .
\end{equation}
As detailed in Appendix~\ref{app:convergence_poisson}, in the context of Poisson inverse problems, for $h$ being Burg's entropy, $F_{\lambda,\gamma} = \lambda f + g_\gamma$ verifies the NoLip condition only on \emph{bounded} convex subsets of $dom(h)$. Thus we select $C$ a non-empty closed bounded convex subset of $\overline{\dom(h)}$. For the algorithm~\eqref{eq:B-RED} to be well-posed and to verify a sufficient decrease of $(F_{\lambda,\gamma}(x^k))$, the iterates need to verify $x_{k} \in C$. 
We propose to modify~\eqref{eq:B-RED} as the Bregman version of Projected Gradient Descent, which corresponds to the BPG algorithm~\eqref{eq:BPG} with $\mathcal{R}=i_C$, the characteristic function of the set $C$:
\begin{equation} \label{eq:BPG_proj}
    \textbf{(B-RED)} \ \ \ \ \  x^{k+1} \in  T_{\tau}(x_k) = \argmin_{x \in \mathbb{R}^n} \{ i_C(x) + \langle x-x^k, \nabla F_{\lambda,\gamma}(x^k) \rangle 
 + \frac{1}{\tau} D_h(x,x^k) \} .
\end{equation}
For general convergence of B-RED, we need the following assumptions
\begin{assumption} \label{ass:complement_BRED}  ~
    \begin{itemize}
        \item[(i)] $\mathcal{R}=i_C$, with $C$ a non-empty closed, bounded, convex and semi-algebraic subset of $\overline{\dom(h)}$ such that  $C \cap int \dom(h) \neq \emptyset$.
        \item[(ii)] $g_\gamma$ has Lipschitz continuous gradient and there is $L_\gamma>0$ such that $L_\gamma h-g_\gamma$ is convex on $C \cap int \dom(h)$.
    \end{itemize}
\end{assumption}
In \citep{bolte2018first,bauschke2017descent}, the NoLip constant $L$ needs to be known to set the stepsize of the BPG algorithm as $\tau L < 1$. 
In practice, the NoLip constant which depends on $f$, $g_\gamma$ and $C$ is either unknown or over-estimated. In order to avoid small stepsize,
we adapt the \textbf{backtracking} strategy of~\cite[Chapter 10]{beck2017first} to automatically adjust the stepsize while keeping convergence guarantees.   
Given $\gamma \in (0,1)$, $\eta \in [0,1)$ and an initial stepsize $\tau_0 > 0$,
the following backtracking update rule on $\tau$ is applied at each iteration~$k$:
\begin{equation}
    \label{eq:backtracking}
    \begin{split}
        \text{while } \ \ & F_{\lambda,\gamma}(x_{k}) - F_{\lambda,\gamma}(T_{\tau}(x_{k})) < \frac{\gamma}{\tau} D_h(T_{\tau}(x_{k}),x_k) ,  \quad \tau \longleftarrow \eta \tau.
    \end{split}
\end{equation}
Using the general nonconvex convergence analysis of BPG realized in Appendix~\ref{app:convergence_BPG}, we can show sufficient decrease of the objective and convergence of the iterates of B-RED.

\begin{thm}[Proof in Appendix~\ref{app:proof_BRED_poisson}] \label{thm:BRED_poisson}
Under Assumption~\ref{ass:general_ass} and Assumption~\ref{ass:complement_BRED} , the iterates $x_k$ given by the B-RED algorithm \eqref{eq:BPG_proj} with the backtracking procedure \eqref{eq:backtracking} decrease $F_{\lambda,\gamma}$ and converge to a critical point of $\Psi = i_C + F_{\lambda,\gamma}$ with rate $\min_{0 \leq k \leq K} D_h(x^{k+1},x^{k}) = O(1/K)$.
\end{thm}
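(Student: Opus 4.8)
The plan is to reduce B-RED to an instance of the general nonconvex BPG analysis developed in Appendix~\ref{app:convergence_BPG} (Proposition~\ref{prop:BPG_1} and Theorem~\ref{thm:BPG_2}), applied with $F = F_{\lambda,\gamma} = \lambda f + g_\gamma$ and $\mathcal{R} = i_C$. First I would verify that the triple $(F,\mathcal{R},h)$ meets the hypotheses of that general theory. Legendre-type $\mathcal{C}^2$ regularity of $h$ is Assumption~\ref{ass:general_ass}(i); coercivity, $\mathcal{C}^1$ smoothness on $int\dom(h)$, and semi-algebraicity of $f$ come from \ref{ass:general_ass}(ii); the NoLip constant for $F_{\lambda,\gamma}$ on $C\cap int\dom(h)$ is obtained by adding the NoLip bound for $\lambda f$ (\ref{ass:general_ass}(iii), scaled) to that for $g_\gamma$ (\ref{ass:complement_BRED}(ii)), giving some $L_F$ with $L_F h - F_{\lambda,\gamma}$ convex on $C\cap int\dom(h)$. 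The Lipschitz-on-sublevel-set and local-strong-convexity conditions needed for iterate convergence follow from \ref{ass:general_ass}(iv) together with \ref{ass:complement_BRED}(ii) (Lipschitz $\nabla g_\gamma$). Since $C$ is bounded, every sublevel set of $\Psi = i_C + F_{\lambda,\gamma}$ is bounded, so all the "bounded convex subset" caveats are automatically met. Finally, the KL property of $\Psi$ follows because $\lambda f$, $g_\gamma$, and $i_C$ are all semi-algebraic (\ref{ass:general_ass}(ii),(v) and \ref{ass:complement_BRED}(i)), hence so is their sum, and semi-algebraic functions satisfy KL by \citep{bolte2007lojasiewicz}.

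Next I would handle well-posedness of the update $T_\tau$ in \eqref{eq:BPG_proj}. The subproblem minimizes $i_C(x) + \langle x - x^k, \nabla F_{\lambda,\gamma}(x^k)\rangle + \frac1\tau D_h(x,x^k)$; since $D_h(\cdot,x^k)$ is strictly convex and coercive on $C$ (using strong convexity of $h$ on the bounded convex set $C$ from \ref{ass:general_ass}(iv)) and $C$ is nonempty closed convex, the minimizer exists and is unique, and because $C\cap int\dom(h)\neq\emptyset$ with $h$ Legendre, the minimizer lies in $int\dom(h)$, so the iteration is well-defined and stays in $C\cap int\dom(h)$; in particular $\nabla F_{\lambda,\gamma}(x^{k+1})$ is well-defined for the next step.

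Then comes the sufficient-decrease argument with backtracking. The key point is that the backtracking loop \eqref{eq:backtracking} terminates after finitely many halvings: once $\tau$ drops below $1/L_F$, the NoLip inequality (the "descent lemma" in the Bregman geometry, as in \citep{bauschke2017descent}) guarantees $F_{\lambda,\gamma}(x^k) - F_{\lambda,\gamma}(T_\tau(x^k)) \geq (\frac1\tau - L_F) D_h(T_\tau(x^k), x^k) \geq \frac{\gamma}{\tau} D_h(T_\tau(x^k),x^k)$ for $\gamma$ small enough, which is exactly the exit condition — here one also uses that the iterates remain in the set where NoLip holds, which is ensured by the projection onto $C$. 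Hence there is a uniform lower bound $\underline\tau>0$ on the accepted stepsizes, the accepted $\tau_k$ lie in $[\underline\tau,\tau_0]$, and we get $F_{\lambda,\gamma}(x^k) - F_{\lambda,\gamma}(x^{k+1}) \geq \frac{\gamma}{\tau_0} D_h(x^{k+1},x^k)$. Telescoping over $k=0,\dots,K$ and using that $F_{\lambda,\gamma}$ is bounded below on $C$ (coercive $f$ plus lower-bounded $g_\gamma$) yields $\sum_k D_h(x^{k+1},x^k) < \infty$, hence $\min_{0\le k\le K} D_h(x^{k+1},x^k) = O(1/K)$.

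Finally, for convergence of the whole sequence to a critical point of $\Psi$, I would invoke the abstract KL-based machinery of Theorem~\ref{thm:BPG_2}: the sufficient decrease just established is condition (i); a relative-error/subgradient bound of the form $\operatorname{dist}(0,\partial\Psi(x^{k+1})) \leq b\,\|x^{k+1}-x^k\|$ follows from the optimality condition of \eqref{eq:BPG_proj}, $0 \in \partial i_C(x^{k+1}) + \nabla F_{\lambda,\gamma}(x^k) + \frac1{\tau_k}(\nabla h(x^{k+1}) - \nabla h(x^k))$, rewritten as $\nabla F_{\lambda,\gamma}(x^{k+1}) - \nabla F_{\lambda,\gamma}(x^k) + \frac1{\tau_k}(\nabla h(x^{k+1})-\nabla h(x^k)) \in \partial\Psi(x^{k+1})$, and bounded using local Lipschitzness of $\nabla F_{\lambda,\gamma}$ and $\nabla h$ on the bounded set $C$ together with the equivalence between $\|x^{k+1}-x^k\|$ and $\sqrt{D_h(x^{k+1},x^k)}$ on $C$ (local strong convexity and Lipschitz gradient of $h$); boundedness of the iterates (condition (iii)) is immediate since they stay in the compact set $\overline C$; and continuity of $\Psi$ on its domain plus the KL property finish the argument via the standard Bolte–Sabach–Teboulle scheme, showing $x^k$ is Cauchy and converges to a point $x^\star$ with $0\in\partial\Psi(x^\star)$. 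The main obstacle is the first step: carefully checking that the general convergence theorems of Appendix~\ref{app:convergence_BPG} genuinely accommodate a Bregman potential ($h$ = Burg's entropy) that is only locally strongly convex and a data term whose gradient is only locally Lipschitz — i.e., that every place the original \citep{bolte2018first} proof used global strong convexity / global Lipschitzness can be localized to the bounded set $C$, which is precisely what the projection $\mathcal{R}=i_C$ and Assumption~\ref{ass:general_ass}(iv) are designed to make possible.
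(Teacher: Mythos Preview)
Your proposal is correct and follows essentially the same approach as the paper: instantiate the general nonconvex BPG convergence theory of Appendix~\ref{app:convergence_BPG} with $F=F_{\lambda,\gamma}$ and $\mathcal{R}=i_C$, then verify that Assumptions~\ref{ass:general_ass} and~\ref{ass:complement_BRED} imply the hypotheses (well-posedness, NoLip on $C\cap int\dom(h)$ with constant $\lambda L_f + L_\gamma$, KL via semi-algebraicity, local strong convexity and Lipschitzness) needed for Proposition~\ref{prop:BPG_1}, Theorem~\ref{thm:BPG_2}, and the backtracking Proposition~\ref{prop:backtracking1}. One small slip: in your backtracking-termination step, $\gamma\in(0,1)$ is fixed and the exit condition is met once $\tau<(1-\gamma)/L_F$, not ``for $\gamma$ small enough''; otherwise your outline matches the paper's proof, with more of the internal mechanics spelled out than the paper (which simply cites the appendix results).
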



\subsection{Bregman Plug-and-Play (B-PnP) }\label{ssec:B_PnP}

We now consider the equivalent of PnP Proximal Gradient Descent algorithm in the Bregman framework.
Given a denoiser $\mathcal{B}_\gamma$ with $\Im(\mathcal{B}_\gamma) \subset \dom(h)$ and $\lambda>0$ such that ${\Im(\nabla h - \lambda \nabla f) \subseteq \dom(\nabla h^*)}$, it writes 
\begin{align} \label{eq:B-PnP}
    \textbf{(B-PnP)} \ \ \ \ \ x^{k+1} = \mathcal{B}_\gamma \circ \nabla h^*(\nabla h - \lambda \nabla f)(x_k).
\end{align}
We use again as $\mathcal{B}_\gamma$ the Bregman Score Denoiser~\eqref{eq:bregman_score_denoiser}. With $\psi_\gamma$ defined from $g_\gamma$
as in \eqref{eq:psi_gamma} and assuming that $\psi_\gamma \circ \nabla h^*$ is convex on $int \dom(h^*)$, Proposition~\ref{prop:prox_bregman} states that the Bregman Score denoiser $\mathcal{B}_\gamma$ is the Bregman proximal operator of some nonconvex potential $\phi_\gamma$ verifying~\eqref{eq:eq_psi_phi_denoiser}. The algorithm B-PnP~\eqref{eq:B-PnP} then becomes
$x^{k+1} \in \prox^h_{\phi_\gamma} \circ \nabla h^*(\nabla h - \lambda \nabla f)(x_k)$, 
which writes as a Bregman Proximal Gradient algorithm, with stepsize $\tau = 1$,
\begin{align}\label{eq:BPG_tau_1}
x^{k+1} &\in \argmin_{x \in \mathbb{R}^n} \{ \phi_\gamma(x) + \langle x-x^k, \lambda \nabla f(x^k) \rangle  + D_h(x,x^k) \} .
\end{align}
With Proposition~\ref{prop:prox_bregman}, we have $\mathcal{B}_\gamma(y) \in \prox^h_{\phi_\gamma}(y)$ \emph{i.e.} a proximal step on $\phi_\gamma$ \emph{with stepsize $1$}. We are thus forced to keep a fixed stepsize $\tau=1$ in the  BPG algorithm~\eqref{eq:BPG_tau_1} and no backtracking is possible.
Using Appendix~\ref{app:convergence_BPG}, we can show that B-PnP converges towards a stationary point of ${\lambda f + \phi_\gamma}$.

\begin{thm}[Proof in Appendix~\ref{app:proof_BPnP_poisson}] \label{thm:BPnP_poisson} Assume Assumption~\ref{ass:general_ass} and  $\psi_\gamma \circ \nabla h^*$ convex on $int \dom(h^*)$. Then for
${\Im(\nabla h - \lambda \nabla f) \subseteq \dom(\nabla h^*)}$, $\Im(\mathcal{B}_\gamma) \subseteq \dom(h)$ and $\lambda L_f < 1$ (with  $L_f$ specified in Assumption~\ref{ass:general_ass}), the iterates $x_k$ given by the B-PnP algorithm~\eqref{eq:B-PnP} decrease $\lambda f + \phi_\gamma$ and converge to a critical point of $\lambda f + \phi_\gamma$ with rate $\min_{0 \leq k \leq K} D_h(x^{k+1},x^{k}) = O(1/K)$.
\end{thm}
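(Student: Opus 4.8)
The plan is to recognize B-PnP as an instance of BPG with fixed stepsize $\tau = 1$ applied to the objective $\Psi = \lambda f + \phi_\gamma$, where $\phi_\gamma$ plays the role of the nonsmooth term $\mathcal{R}$ and $\lambda f$ the role of the smooth term $F$, and then invoke the general nonconvex convergence theory established in Appendix~\ref{app:convergence_BPG} (Proposition~\ref{prop:BPG_1} and Theorem~\ref{thm:BPG_2}). The first step is to verify the equivalence of iterations: since $\psi_\gamma \circ \nabla h^*$ is convex on $int\dom(h^*)$ and $\Im(\mathcal{B}_\gamma) \subseteq \dom(h)$, Proposition~\ref{prop:prox_bregman} gives $\mathcal{B}_\gamma(y) \in \prox^h_{\phi_\gamma}(y)$; combined with the hypothesis $\Im(\nabla h - \lambda \nabla f) \subseteq \dom(\nabla h^*)$, the update~\eqref{eq:B-PnP} is exactly~\eqref{eq:BPG_tau_1}, i.e.\ BPG with $F = \lambda f$, $\mathcal{R} = \phi_\gamma$, $\tau = 1$. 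I would state this cleanly first because all subsequent steps are about the BPG iteration rather than the denoiser.

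The second step is to check that the hypotheses of the general BPG convergence results hold. Here $F = \lambda f$ satisfies NoLip with constant $\lambda L_f$ by Assumption~\ref{ass:general_ass}(iii) (scaling the NoLip inequality by $\lambda$), and $\tau = 1 < 1/(\lambda L_f)$ is precisely the assumption $\lambda L_f < 1$, which is what gives the sufficient-decrease inequality
\[
\Psi(x^{k+1}) + \Bigl(\tfrac{1}{\tau} - \lambda L_f\Bigr) D_h(x^{k+1}, x^k) \le \Psi(x^k).
\]
This already yields monotone decrease of $\lambda f + \phi_\gamma$ along the iterates, summability $\sum_k D_h(x^{k+1},x^k) < \infty$, and the rate $\min_{0\le k\le K} D_h(x^{k+1},x^k) = O(1/K)$. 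For the global convergence to a critical point I would then check the remaining structural assumptions feeding Theorem~\ref{thm:BPG_2}: coercivity of $f$ (Assumption~\ref{ass:general_ass}(ii)) together with the lower bound on $\phi_\gamma$ (Assumption~\ref{ass:general_ass}(v)) makes $\Psi$ coercive and bounded below, so the sublevel set $\{\Psi \le \Psi(x^0)\}$ is bounded and the iterates stay in it; on this bounded set Assumption~\ref{ass:general_ass}(iv) provides local strong convexity of $h$ and Lipschitz continuity of $\nabla h$ and $\nabla f$; and semi-algebraicity of $f$, $g_\gamma$, $\phi_\gamma$ (Assumptions~\ref{ass:general_ass}(ii),(v)) gives the Kurdyka--Łojasiewicz property for $\Psi = \lambda f + \phi_\gamma$. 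Feeding these into the abstract convergence machinery gives convergence of the full sequence $x_k$ to a single limit point $x^*$ and, via the subgradient inequality for the BPG step, that $x^*$ is a critical point of $\lambda f + \phi_\gamma$.

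The main subtlety — and the place I would be most careful — is twofold. First, one must be sure that the $\tau=1$ regime is genuinely admissible: because $\mathcal{B}_\gamma$ is intrinsically a proximal step of $\phi_\gamma$ with stepsize exactly $1$ (no rescaling is available, as noted after~\eqref{eq:BPG_tau_1}), the sufficient-decrease argument has no slack to spare, and the strict inequality $\lambda L_f < 1$ is essential rather than cosmetic; I would track the constant $(1 - \lambda L_f) > 0$ explicitly through the telescoping sum. Second, unlike the classical setting of~\citep{bolte2018first}, $h$ (Burg's entropy) is not strongly convex on all of $\mathbb{R}^n$ and $\nabla f$ is not globally Lipschitz, so one cannot quote their theorem verbatim: the argument must genuinely use that the iterates remain in a bounded sublevel set and stay a positive distance from the boundary of $\dom(h)$ (so that $\nabla h^*$, $\nabla^2 h$ are well behaved there), which is exactly the content of the generalized Proposition~\ref{prop:BPG_1} and Theorem~\ref{thm:BPG_2} in Appendix~\ref{app:convergence_BPG}. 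Verifying that $\Im(\nabla h - \lambda\nabla f) \subseteq \dom(\nabla h^*)$ and $\Im(\mathcal{B}_\gamma)\subseteq\dom(h)$ keep every composition in~\eqref{eq:B-PnP} well-defined along the whole trajectory is the remaining bookkeeping obstacle, and I would handle it together with the boundedness of the sublevel set.
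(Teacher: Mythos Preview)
Your proposal is correct and follows essentially the same route as the paper: identify B-PnP as BPG with $F=\lambda f$, $\mathcal{R}=\phi_\gamma$, $\tau=1$ via Proposition~\ref{prop:prox_bregman}, then verify the structural hypotheses (NoLip with constant $\lambda L_f$, coercivity and lower-boundedness of $\Psi$, semi-algebraicity/KL, and Assumption~\ref{ass:general_ass}(iv) for the boundedness/Lipschitz control on sublevel sets) and invoke Proposition~\ref{prop:BPG_1} and Theorem~\ref{thm:BPG_2}. Your write-up is in fact more explicit than the paper's own appendix, and your emphasis on the two subtleties (the rigidity of $\tau=1$ forcing the strict inequality $\lambda L_f<1$, and the need for the generalized BPG theory because Burg's entropy is not globally strongly convex) is exactly the point of the extension in Appendix~\ref{app:convergence_BPG}.
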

\begin{remark}
    The condition $\Im(\mathcal{B}_\gamma) \subseteq \dom(h)$ and the required convexity of $\psi_\gamma \circ \nabla h^*$ come from Proposition~\ref{prop:prox_bregman} while the condition ${\Im(\nabla h - \lambda \nabla f) \subseteq \dom(\nabla h^*)}$ allows the algorithm B-PnP~\eqref{eq:B-PnP} to be well-posed. These assumptions will be discussed with more details in the context of Poisson image restoration in Section~\ref{sec:application}.
\end{remark}



\section{Application to Poisson inverse problems}
\label{sec:application}

We consider ill-posed inverse problems involving the Poisson data-fidelity term $f$ introduced in~\eqref{eq:poisson_f}.
The Euclidean geometry (\emph{i.e.} $h(x)=\frac12||x||^2$) does not suit for such $f$, as it does not have a  Lipschitz gradient. In~\citep[Lemma 7]{bauschke2017descent}, it is shown that an adequate Bregman potential $h$ (in the sense that there exists $L_f$ such that $L_f h -f$ is convex) for~\eqref{eq:poisson_f} is the Burg's entropy 
\begin{equation}\label{eq:burgs}
h(x) = - \sum_{i=1}^n \log(x_i),
\end{equation}
for which $\dom(h) = \mathbb{R}^n_{++}$ and  $L_f h -f$ is convex on $int \dom(h) =  \mathbb{R}^n_{++}$ for $L_f \geq \norm{y}_1$.  For further computation, note that the Burg's entropy~\eqref{eq:burgs} satisfies $\nabla h(x)=\nabla h^*(x)=-\frac1x$ and $\nabla^2 h(x)=\frac1{x^2}$.\newline\noindent
The Bregman score denoiser associated to the Burg's entropy is presented in Section~\ref{ssec:Burgs_BSD}. The corresponding Bregman RED and PnP algorithms are applied to Poisson Image deblurring  in Section~\ref{ssec:deblurring}.
\subsection{Bregman Score Denoiser with Burg's entropy }\label{ssec:Burgs_BSD}
We now specify the study of Section~\ref{sec:bregman_denoising} to the case of the Burg's entropy~\eqref{eq:burgs}.
In this case, the Bregman noise model~\eqref{eq:noise_model} writes (see Appendix~\ref{app:burg_noise} for detailed calculus)
\begin{equation} \label{eq:burg_entropy_noise}
    p(y|x) = \exp( \rho(x) + n\gamma) \prod_{i=1}^n \left(\frac{x_i}{y_i}\right)^\gamma\exp\left(- \gamma \frac{x_i}{y_i}\right).
\end{equation}
For $\gamma >1$, this is a product of \emph{Inverse Gamma} ($\mathcal{IG}(\alpha,\beta)$) distributions with parameter $\beta_i = \gamma x_i$ and $\alpha_i = \gamma - 1$. This noise model has mean (for $\gamma > 2$) $\frac{\gamma}{\gamma-2}x$ and variance (for $\gamma > 3$) $\frac{\gamma^2}{(\gamma-2)^2(\gamma-3)}x^2$. In particular, for large $\gamma$, the noise becomes centered on $x$ with signal-dependent variance $x^2 / \gamma$. 
Furthermore, using Burg's entropy~\eqref{eq:burgs}, the optimal posterior mean~\eqref{eq:post_mean} and the Bregman Score Denoiser~\eqref{eq:bregman_score_denoiser} respectively write, for $y \in \mathbb{R}^n_{++}$, 
\begin{align} \label{eq:post_mean_h}
\hat x_{MMSE}(y) &= y - \frac{1}{\gamma} y^2 \nabla (-\log p_Y)(y) \\
\label{eq:poisson_BS_den}
\mathcal{B}_\gamma(y) &= y - y^2 \nabla g_\gamma(y).
\end{align}

\noindent{\bf Bregman Proximal Operator}
Considering the Burg's entropy~\eqref{eq:burgs} in  Proposition~\ref{prop:prox_bregman} we get that, if $x \to \psi_\gamma \circ \nabla h^*(x) := \psi_\gamma(-\frac{1}{x})$ is convex on $\dom(h^*) = \mathbb{R}_{--}$, the Bregman Score Denoiser~\eqref{eq:poisson_BS_den} satisfies $\mathcal{B}_\gamma(y) = \prox^h_{\phi_\gamma}$, for the nonconvex potential
$\psi_\gamma(y) = \sum_{i=1}^n \log(y_i) - g_\gamma(y) -1$. The convexity of $\psi_\gamma \circ \nabla h^*$ can be verified using the following characterization, which translates to a condition on the deep potential $g_\gamma$ (see Appendix~\ref{app:app_denoiser} for details)
\begin{align} \label{eq:charac_convexity}
    & \forall x \in \dom(h^*), \ \ \forall d \in \mathbb{R}^n, \ \ \langle \nabla^2 \eta_\gamma(x) d, d \rangle \geq 0 \\
    \Longleftrightarrow \quad &\forall y \in \mathbb{R}^n_{++}, \ \ \forall d \in \mathbb{R}^n, \ \ \langle \nabla^2 g_\gamma(y) d, d \rangle \leq \sum_{i=1}^n \left( \frac{1 - 2y \nabla g_\gamma(y)}{y^2} \right)_i d_i^2.
\end{align}
It is difficult to constrain $g_\gamma$ to satisfy the above condition. As discussed in Appendix~\ref{app:app_denoiser}, we empirically observe on images that the trained deep potential $g_\gamma$ satisfies the above inequality. This suggests that the convexity condition for Proposition~\ref{prop:prox_bregman} is true at least locally, on the image manifold. 

\noindent{\bf Denoising in practice} As~\citet{hurault2021gradient}, we chose to parametrize the deep potential $g_\gamma$ as 
\begin{equation} \label{eq:burg_denoiser}
    g_\gamma(y) = \frac{1}{2}\norm{x-N_\gamma(x)}^2,
\end{equation}
where $N_{\sigma}$ is  the deep convolutional neural network architecture DRUNet~\citep{zhang2021plug} that contains Softplus activations and takes the noise level~$\gamma$ as input. $\nabla g_\gamma$ is computed with automatic differentiation. 
We train $\mathcal{B}_\gamma$ to denoise images corrupted with random Inverse Gamma noise of level~$\gamma$, sampled from clean images via $p(y|x) = \prod_{i=1}^n \mathcal{IG}(\alpha_i,\beta_i)(x_i)$.
To sample $y_i \sim \mathcal{IG}(\alpha_i,\beta_i)$, we sample $z_i \sim \mathcal{G}(\alpha_i,\beta_i)$ and take $y_i = 1/z_i$. 
Denoting as $p$ the distribution of a database of clean images, training is performed with the $L^2$ loss 
\begin{equation}
    \label{eq:L2_loss}
    \mathcal{L}(\gamma) = \mathbb{E}_{x\sim p, y \sim \mathcal{IG}_\gamma(x)} \left[\norm{\mathcal{B}_\gamma(y)-x}^2 \right].
\end{equation}

\noindent{\bf Denoising performance} We evaluate the performance of the proposed Bregman Score DRUNet (B-DRUNet) denoiser~\eqref{eq:burg_denoiser}. It is trained with the loss~\eqref{eq:L2_loss}, with $1/\gamma$ uniformly sampled in $(0,0.1)$. More details on the architecture and the training can be found in Appendix~\ref{app:app_denoiser}. We compare the performance of B-DRUNet~\eqref{eq:burg_denoiser} with the same network DRUNet directly trained to denoise inverse Gamma noise with $L_2$ loss. Qualitative and quantitative results presented in Figure~\ref{fig:denoising} and Table~\ref{tab:gamma_noise} show that the Bregman Score Denoiser (B-DRUNet), although constrained to be written as~\eqref{eq:poisson_BS_den} with a conservative vector field $\nabla g_\gamma$, performs on par with the unconstrained denoiser (DRUNet).

\begin{table}[ht]
    \centering
    \begin{tabular}{c c c c c c c c}
         $\gamma$ & $10$ & $25$ & $50$ & $100$ & $200$ \\  
         \hline
         DRUNet & $28.42$ & $30.91$ & $32.80$ & $34.76$ & $36.79$  \\
         B-DRUNet & $28.38$ & $30.88$ & $32.76$ & $34.74$ & $36.71$\vspace{0.1cm} \\
    \end{tabular} 
    \caption{Average denoising  PSNR performance
        of Inverse Gamma noise denoisers B-DRUNet and DRUNet on $256\times256$ center-cropped images from the CBSD68 dataset, for various
        noise levels $\gamma$. }
    \label{tab:gamma_noise}
\end{table}

    \begin{figure*}[ht] \centering
    \captionsetup[subfigure]{justification=centering}
    \begin{subfigure}[b]{.22\linewidth}
            \centering
            \begin{tikzpicture}[spy using outlines={rectangle,blue,magnification=5,size=1.2cm, connect spies}]
            \node {\includegraphics[width=2.5cm]{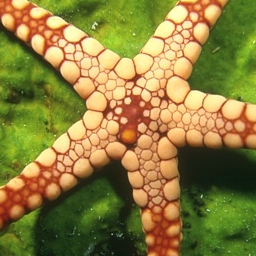}};
            \spy on (-0.35,-0.2) in node [left] at  (1.25,.62);
            \end{tikzpicture}
            \caption{Clean \\ ~}
        \end{subfigure}
    \begin{subfigure}[b]{.22\linewidth}
            \centering
            \begin{tikzpicture}[spy using outlines={rectangle,blue,magnification=5,size=1.2cm, connect spies}]
            \node {\includegraphics[width=2.5cm]{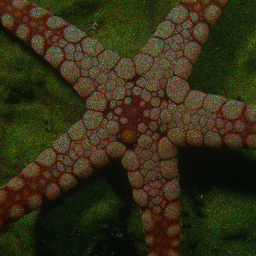}};
            \spy on (-0.35,-0.2) in node [left] at  (1.25,.62);
            \end{tikzpicture}
            \caption{Noisy \\ ($17.85$ dB)}
        \end{subfigure}
    \begin{subfigure}[b]{.22\linewidth}
            \centering
            \begin{tikzpicture}[spy using outlines={rectangle,blue,magnification=5,size=1.2cm, connect spies}]
                \node {\includegraphics[width=2.5cm]{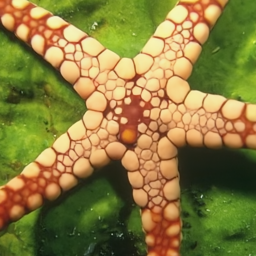}};
            \spy on (-0.35,-0.2) in node [left] at (1.25,.62);
            \end{tikzpicture}
            \caption{DRUNet \\ ($32.56$ dB) } 
        \end{subfigure} 
    \begin{subfigure}[b]{.22\linewidth}
            \centering
            \begin{tikzpicture}[spy using outlines={rectangle,blue,magnification=5,size=1.2cm, connect spies}]
            \node {\includegraphics[width=2.5cm]{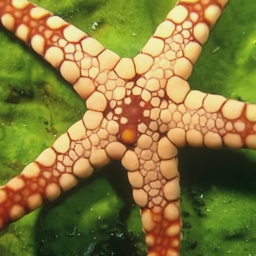}};
            \spy on (-0.35,-0.2) in node [left] at (1.25,.62);
            \end{tikzpicture}
            \caption{B-DRUNet \\ ($32.56$ dB) } 
        \end{subfigure} 
    \caption{Denoising of a $256\times256$ image corrupted with Inverse Gamma noise of level $\gamma=25$.}
    \label{fig:denoising}
    \end{figure*}

\subsection{Bregman Plug-and-Play for Poisson Image Deblurring}\label{ssec:deblurring}
We now derive the explicit B-RED and B-PnP algorithms in the context of Poisson image restoration. 
Choosing ${C = [0,R]^n}$ for some $R>0$, the B-RED and B-PnP algorithms~\eqref{eq:BPG_proj} and~\eqref{eq:B-PnP} write\ 
\begin{align} \label{eq:B-RED_poisson}
    \textbf{(B-RED)} \ \ \ x_i^{k+1} &= \argmin \{ x \nabla F_{\lambda,\gamma}(x^k)_i  + \frac{1}{\tau} \left(\frac{x}{x_i^k} - \log\frac{x}{x_i^k}\right) : x \in [0,R] \} &1 \leq i \leq n \\
    &= \left\{\begin{array}{ll}
   &\frac{x_i^k}{1 + \tau x_i^k  \nabla F_{\lambda,\gamma}(x^k)_i}  \ \ \text{if} \ \ 0 \leq \frac{x_i^k}{1 + \tau x_i^k  \nabla F(x^k)_i} \leq R \\
   &R \ \ \text{else}
        \end{array}\right. &1 \leq i \leq n \\
\label{eq:B-PnP_poisson}
    \textbf{(B-PnP)} \ \ \  x^{k+1} &= \mathcal{B}_\gamma \left( \frac{x^k}{1 + \tau x^k  \nabla F(x^k)} \right).
\end{align}

Verification of the assumptions of Theorems~\ref{thm:BRED_poisson} and~\ref{thm:BPnP_poisson} for the convergence of both algorithms is discussed in Appendix~\ref{app:convergence_poisson}.

\noindent{\bf Poisson deblurring} Equipped with the Bregman Score Denoiser, we now investigate the practical performance of the plug-and-play B-RED and B-PnP algorithms for image deblurring with Poisson noise.  In this context, the degradation operator $A$ is a convolution with a blur kernel. 
We verify the efficiency of both algorithms over a variety of blur kernels (real-world camera shake, uniform and Gaussian). 
The hyper-parameters $\gamma$, $\lambda$ are optimized for each algorithm and for each noise level~$\alpha$ by grid search and are given in Appendix~\ref{app:exp_deblurring}. In practice, we first initialize the algorithm with $100$ steps with large  $\tau$ and $\gamma$ so as to quickly initialize the algorithm closer to the right stationary point. 

We show in Figures~\ref{fig:deblurring1} and~\ref{fig:deblurring2} that both B-PnP and B-RED algorithms provide good visual reconstruction. Moreover, we observe that, in practice, both algorithms satisfy the sufficient decrease property of the objective function as well as the convergence of the iterates. Additional quantitative performance and comparisons with other Poisson deblurring methods are given in Appendix~\ref{app:convergence_poisson}.

 \begin{figure*}[ht] \centering
    \captionsetup[subfigure]{justification=centering}
    \begin{subfigure}[b]{.22\linewidth}
            \centering
            \begin{tikzpicture}[spy using outlines={rectangle,blue,magnification=5,size=1.2cm, connect spies}]
            \node {\includegraphics[height=2.5cm]{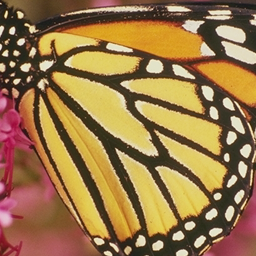}};
            \spy on (0.25,-0.4) in node [left] at  (1.25,.62);
                \node at (-0.89,-0.89) {\includegraphics[scale=1.2]{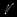}};
            \end{tikzpicture}
            \caption{Clean}
        \end{subfigure}
    \begin{subfigure}[b]{.22\linewidth}
            \centering
            \begin{tikzpicture}[spy using outlines={rectangle,blue,magnification=5,size=1.2cm, connect spies}]
            \node {\includegraphics[height=2.5cm]{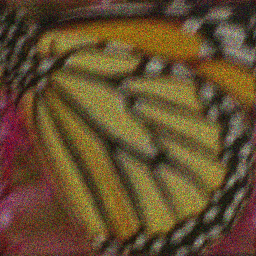}};
            \spy on (0.25,-0.4) in node [left] at  (1.25,.62);
            \end{tikzpicture}
            \caption{Observed ($14.91$dB)}
        \end{subfigure}
    \begin{subfigure}[b]{.22\linewidth}
            \centering
            \begin{tikzpicture}[spy using outlines={rectangle,blue,magnification=5,size=1.2cm, connect spies}]
            \node {\includegraphics[width=2.5cm]{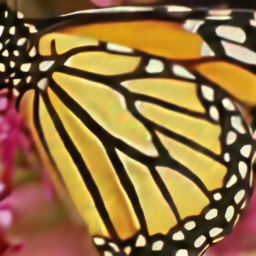}};
            \spy on (0.25,-0.4) in node [left] at (1.25,.62);
            \end{tikzpicture}
            \caption{B-RED ($23.01$dB) }
        \end{subfigure}
    \begin{subfigure}[b]{.22\linewidth}
            \centering
            \begin{tikzpicture}[spy using outlines={rectangle,blue,magnification=5,size=1.2cm, connect spies}]
            \node {\includegraphics[width=2.5cm]{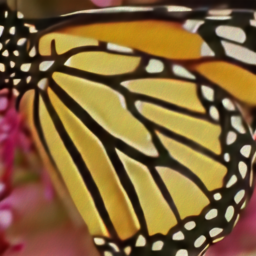}};
            \spy on (0.25,-0.4) in node [left] at (1.25,.62);
            \end{tikzpicture}
            \caption{B-PnP ($22.96$dB)}
        \end{subfigure}
       \begin{subfigure}[b]{.22\linewidth}
            \centering
            \includegraphics[width=3.2cm]{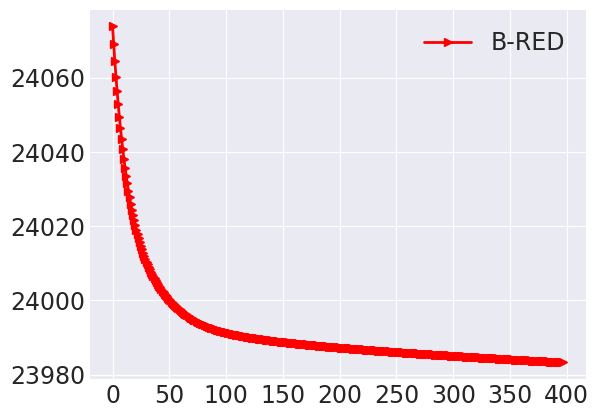}
            \caption{$(\lambda f + g_\gamma)(x_k)$ \\ B-RED}
        \end{subfigure}
        \begin{subfigure}[b]{.22\linewidth}
            \centering
            \includegraphics[width=3.2cm]{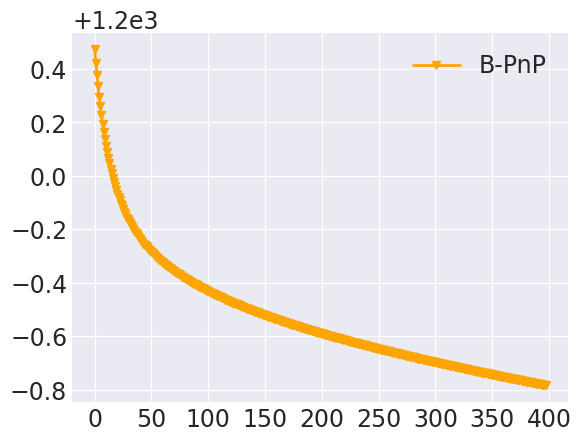}
            \caption{$(\lambda f + \phi_\gamma)(x_k)$ \\ B-PnP}
        \end{subfigure}
        \begin{subfigure}[b]{0.3\linewidth}
            \centering
            \includegraphics[height=2.5cm]{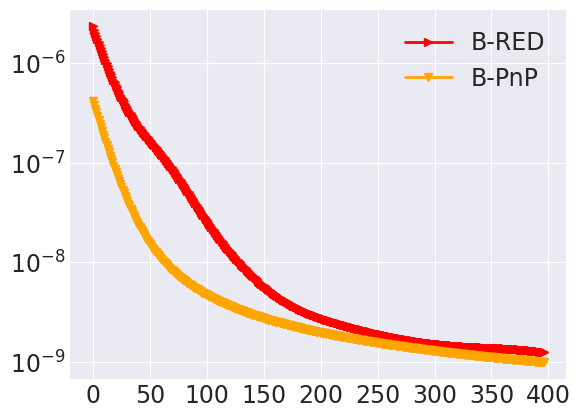}
            \caption{\footnotesize $\norm{x_{i+1}-x_i}^2$}
        \end{subfigure}
   \caption{Deblurring from the indicated motion kernel and Poisson noise with $\alpha=40$. \vspace{-0.3cm}}
    \label{fig:deblurring1}
    \end{figure*}

 \begin{figure*}[ht] \centering
    \captionsetup[subfigure]{justification=centering}
    \begin{subfigure}[b]{.22\linewidth}
            \centering
            \begin{tikzpicture}[spy using outlines={rectangle,blue,magnification=5,size=1.2cm, connect spies}]
            \node {\includegraphics[height=2.5cm]{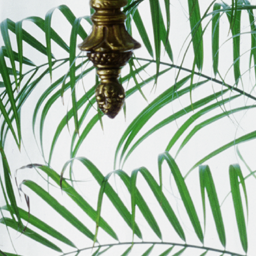}};
            \spy on (-0.3,0.18) in node [left] at (1.25,.62);
                \node at (-0.89,-0.89) {\includegraphics[scale=0.85]{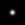}};
            \end{tikzpicture}
            \caption{Clean}
        \end{subfigure}
    \begin{subfigure}[b]{.22\linewidth}
            \centering
            \begin{tikzpicture}[spy using outlines={rectangle,blue,magnification=5,size=1.2cm, connect spies}]
            \node {\includegraphics[height=2.5cm]{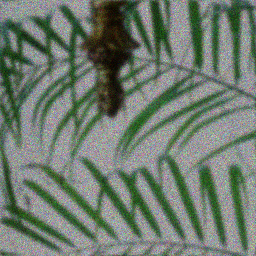}};
            \spy on (-0.3,0.18) in node [left] at (1.25,.62);
            \end{tikzpicture}
            \caption{Observed ($16.21$dB)}
        \end{subfigure}
    \begin{subfigure}[b]{.22\linewidth}
            \centering
            \begin{tikzpicture}[spy using outlines={rectangle,blue,magnification=5,size=1.2cm, connect spies}]
            \node {\includegraphics[width=2.5cm]{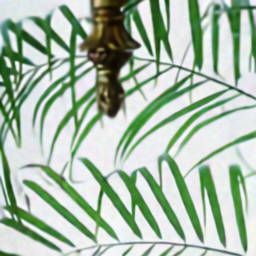}};
            \spy on (-0.3,0.18) in node [left] at (1.25,.62);
            \end{tikzpicture}
            \caption{B-RED ($24.50$dB) }
        \end{subfigure}
    \begin{subfigure}[b]{.22\linewidth}
            \centering
            \begin{tikzpicture}[spy using outlines={rectangle,blue,magnification=5,size=1.2cm, connect spies}]
            \node {\includegraphics[width=2.5cm]{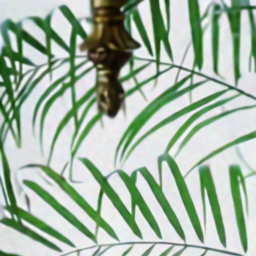}};
            \spy on (-0.3,0.18) in node [left] at (1.25,.62);
            \end{tikzpicture}
            \caption{B-PnP ($24.51$dB)}
        \end{subfigure}
       \begin{subfigure}[b]{.22\linewidth}
            \centering
            \includegraphics[width=3.2cm]{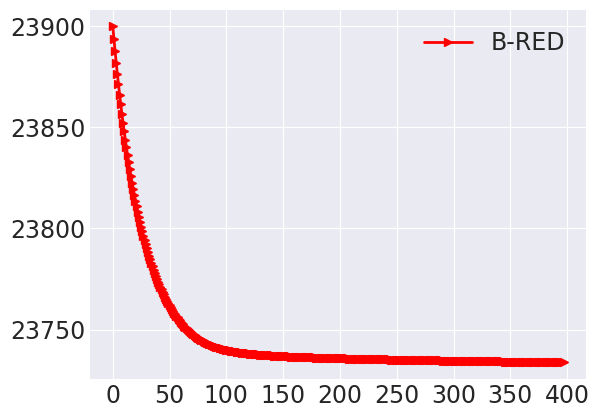}
            \caption{$(\lambda f + g_\gamma)(x_k)$ \\ B-RED}
        \end{subfigure}
        \begin{subfigure}[b]{.22\linewidth}
            \centering
            \includegraphics[width=3.2cm]{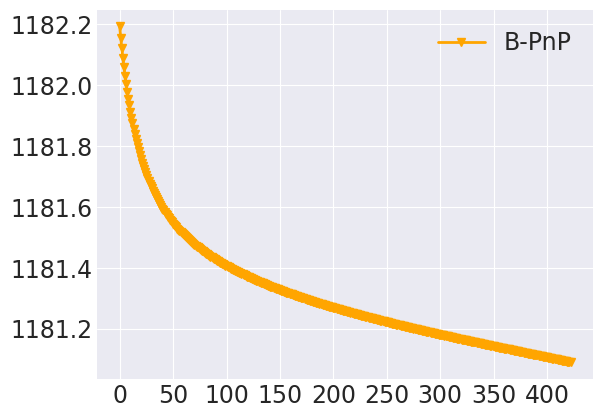}
            \caption{$(\lambda f + \phi_\gamma)(x_k)$ \\ B-PnP}
        \end{subfigure}
        \begin{subfigure}[b]{0.3\linewidth}
            \centering
            \includegraphics[height=2.5cm]{images/deblurring/k1-40/Both-k1-40.png}
            \caption{\footnotesize $\norm{x_{i+1}-x_i}^2$}
        \end{subfigure}
    \caption{Deblurring from the indicated Gaussian blur kernel and Poisson noise with $\alpha=60$.}
    \label{fig:deblurring2}
    \end{figure*} 


\section{Conclusion}
In this paper, we derive a complete extension of the plug-and-play framework in the general Bregman paradigm for non-smooth image inverse problems. Given a convex potential $h$ adapted to the geometry of the problem, we propose a new deep denoiser, parametrized by $h$, which provably writes as the Bregman proximal operator of a nonconvex potential. We argue that this denoiser should be trained on a particular noise model, called Bregman noise, that also depends on $h$. By plugging this denoiser in the BPG algorithm, we propose two new plug-and-play algorithms, called B-PnP and B-RED, and show that both algorithms
converge to stationary points of explicit nonconvex functionals. We apply this framework to Poisson image inverse problem. Experiments on image deblurring illustrate numerically the convergence and the efficiency of the approach. 

\section{Acknowledgements}
This work was funded by the French ministry of research
through a CDSN grant of ENS Paris-Saclay. It has
also been carried out with financial support from the French
Research Agency through the PostProdLEAP and Mistic
projects (ANR-19-CE23-0027-01 and ANR-19-CE40-005). It has also been supported by the NSF CAREER award under grant CCF-2043134.
\bibliography{refs}
\bibliographystyle{plainnat}

\newpage 

\appendix

\section{Definitions}

\subsection{Legendre functions, Bregman divergence}
\label{app:def_legendre_bregman}

\begin{definition}[Legendre function, \cite{rockafellar1997convex}]
Let $h : C \subseteq \mathbb{R}^n  \to  \mathbb{R} \cup \{ + \infty \}$ be a proper lower semi-continuous
convex function. It is called:
\begin{itemize}
    \item[(i)] essentially smooth, if 
    $h$ is differentiable on $int \dom(h)$, with moreover $\norm{\nabla h(x^k)} \to \infty$ for every sequence $\{x^k\}_{k \in \mathbb{N}}$ of $int \dom(h)$ converging towards a boundary point of $\dom(h)$.
    \item[(ii)] of Legendre type if $h$ is essentially smooth and strictly convex on $int \dom(h)$.
\end{itemize}
\end{definition}

We also recall the following properties of Legendre functions that are used without justification in our analysis (see \cite[Section 26]{rockafellar1997convex} for more details): 
\begin{itemize}
\item $h$ is of Legendre type if and only if its convex conjugate $h^*$ is of Legendre type. 
\item For $h$ of Legendre type, $dom(\nabla h) = int \dom(h)$, $\nabla h$  is a bijection from $int \dom(h)$ to
$int \dom(h^*)$ and ${(\nabla h)^{-1} = \nabla h^*}$. 
\end{itemize}

\paragraph{Examples} We are interested in the two following Legendre functions 
\begin{itemize}
    \item $L_2$ potential : $h(x) = \frac{1}{2}\norm{x}^2$, $C = \dom(h) = \dom(h^*) = \mathbb{R}^n$, $D_h(x,y) = \frac{1}{2}\norm{x-y}^2$
    \item  Burg's entropy : $h(x) = - \sum_{i=1}^n \log(x_i)$, $C = \dom(h) = \mathbb{R}^n_{++}$, $dom(h^*) = \mathbb{R}^n_{--}$ and $D_h(x,y) = \sum_{i=1}^n \frac{x_i}{y_i} - \log \left( \frac{x_i}{y_i} \right) - 1$.
\end{itemize}

\subsection{Nonconvex subdifferential}

Following~\cite{attouch2013convergence}, we use as notion of subdifferential of a proper, nonconvex function $\Phi$ the \emph{limiting subdifferential}
\begin{equation}
    \label{eq:subdiff0}
   \partial \Phi(x) = \left\{\omega \in \mathbb{R}^n, \exists x_k \to x, f(x_k) \rightarrow f(x), \omega_k \rightarrow \omega, \omega_k \in \hat \partial \Phi(x_k) \right\}
\end{equation}
with $\hat \partial \Phi$ the Fréchet subdifferential of $\Phi$.
We have
\begin{equation}
    \label{eq:subdiff}
   \left\{\omega \in \mathbb{R}^n, \exists x_k \to x, f(x_k) \rightarrow f(x), \omega_k \rightarrow \omega, \omega_k \in  \partial \Phi(x_k) \right\} \subseteq \partial \Phi(x).
\end{equation}

    \subsection{Kurdyka-Lojasiewicz (KL) property and semi-algebraicity}
    \label{app:KL}

    \begin{definition}[Kurdyka-Lojasiewicz (KL)~\citep{attouch2013convergence}]
    
            A function $f : \mathbb{R}^n \longrightarrow \mathbb{R} \cup +\infty$ is said to have the Kurdyka-Lojasiewicz property at $x^* \in dom(f)$
            if there exists $\eta \in (0,+\infty)$, a neighborhood $U$ of $x^*$ and a continuous concave function $\psi : [0,\eta) \longrightarrow \mathbb{R}_+$ such that
            $\psi(0)=0$, $\psi$ is $\mathcal{C}^ 1$ on $(0,\eta)$, $\psi' > 0$ on $(0,\eta)$ and $\forall x \in U \cap [f(x^*) < f < f(x^*)+\eta]$, the Kurdyka-Lojasiewicz inequality holds:
            \begin{equation}
                \psi'(f(x)-f(x^*))dist(0,\partial f(x)) \geq 1 .
            \end{equation}
           
           Proper lower semicontinuous functions which satisfy the Kurdyka-Lojasiewicz inequality at each point of $dom(\partial f)$ are called KL functions.
    \end{definition}

This condition can be interpreted as the fact that, up to a reparameterization, the function is sharp \emph{i.e.} we can bound its subgradients away from 0. For more details and interpretations, we refer to~\citep{attouch2010proximal} and~\citep{bolte2010characterizations}. A large class of functions that have the KL-property is given by semi-algebraic functions. 
\begin{definition}[Semi-algebraicity \citep{attouch2013convergence}] ~
\begin{itemize}
    \item A subset $S$ of $\mathbb{R}^n$ is a real semi-algebraic set if there exists a finite number of real polynomial functions $P_{i,j}, Q_{i,j} : \mathbb{R}^n \to \mathbb{R}$ such that 
    \begin{equation}
        S = \cup_{j=1}^p \cap_{i=1}^q \{ x \in \mathbb{R}^n, P_{i,j} = 0, Q_{i,j} <0 \}
    \end{equation}
    \item A function  $f : \mathbb{R}^n \to \mathbb{R}^m$ is called semi-algebraic if its graph $\{(x,y) \in \mathbb{R}^n \times \mathbb{R}^m, y = f(x) \}$ is a semi-algebraic subset of $\mathbb{R}^n \times \mathbb{R}^m$.
\end{itemize}
\end{definition}

We refer to \citep{attouch2013convergence, coste2000introduction} for more details on semi-algebraic functions and to \citep{bolte2007lojasiewicz} for the proof that semi-algebraic functions are KL. 

\section{More details on the Bregman Denoising Prior}
\label{app:bregman_denoiser}

We detail here the calculations realized in Section~\ref{sec:bregman_denoising}. 

We first verify that the Bregman noise conditional probability belongs to the regular \emph{exponential family of distributions}.
Indeed for $T(y) = \gamma\nabla h(y)$, ${\psi(x) = \gamma h(x) - \rho(x)}$ and ${p_0(y) = \exp \left( \gamma h(y) - \gamma \langle \nabla h(y),y \rangle\right)}$
\begin{equation}
    \begin{split}
       p(y|x) &= p_0(y)\exp \left( \langle x,T(y) \rangle - \psi(x) \right) \\
       &= \exp \left( \gamma h(y) - \gamma \langle \nabla h(y),y \rangle\right)\exp \left( \gamma \langle x,\nabla h(y) \rangle - \gamma h(x) + \rho(x) \right) \\
       &= \exp\left(- \gamma ( h(x) - h(y) - \langle \nabla h(y),x-y\rangle ) + \rho(x) \right) \\
       &=  \exp\left( -\gamma D_h(x,y) + \rho(x) \right)
    \end{split}
\end{equation}
which corresponds to the Bregman noise conditional probability introduced in Equation~\eqref{eq:noise_model}.

The Tweedie formula of the posterior mean is then~\citep{kim2021noise2score}
\begin{equation}
    \nabla T(y).\hat x_{MMSE}(y) = - \nabla \log p_0(y) + \nabla \log p_Y(y) .
\end{equation}
Using 
\begin{equation}
\nabla \log p_0(y) = \gamma \nabla h(y) - \gamma \nabla h(y) - \gamma \nabla^2 h(y). y = - \gamma \nabla^2 h(y). y,
\end{equation}
we get 
\begin{align}
 \gamma  \nabla^2 h(y). \hat x_{MMSE}(y) = \gamma \nabla^2 h(y). y + \nabla \log p_Y(y) .
\end{align}
As $h$ is strictly convex, $\nabla^2 h(y)$ is invertible and 
\begin{align} 
 \hat x_{MMSE}(y) = y + \frac{1}{\gamma}(\nabla^2 h(y))^{-1}.\nabla \log p_Y(y) .
\end{align}

\section{Proof of Proposition~\ref{prop:prox_bregman}}
\label{app:proof_prox_bregman}

\textbf{A new characterization of Bregman proximal operators} 
We first derive in Proposition~\ref{prop:charac_bregman} an extension of \cite[Corollary 5 a)]{gribonval2020characterization} when $h$ is strictly convex and thus $\nabla h$ is invertible. We will then see that Proposition~\ref{prop:prox_bregman} is a direct application of this result.

\begin{proposition}\label{prop:charac_bregman}
Let $h$ of Legendre type on $\mathbb{R}^n$. Let $\zeta : int \dom(h) \to \mathbb{R}^n$. Suppose $Im(\zeta) \subset \dom(h)$. The following properties are equivalent:
    \begin{itemize}
        \item[(i)] There is $\phi : \mathbb{R}^n \to \mathbb{R}\cup \{ +\infty \}$ such that $\Im(\zeta) \subset \dom(\phi)$ and for each $y \in int \dom(h)$
        \begin{equation}
            \zeta(y) \in \argmin_{x \in \mathbb{R}^n} \{ D_h(x,y) + \phi(x) \}.
        \end{equation}
        \item[(ii)] There is a l.s.c $\psi : \mathbb{R}^n \to \mathbb{R}\cup \{ +\infty \}$ proper convex on $int \dom(h^*)$
        such that ${\zeta(\nabla h^*(z)) \in \partial \psi(z)}$ for each $z \in int \dom(h^*)$. 
    \end{itemize} 
    When (i) holds, $\psi$ can be chosen given $\phi$ with
    \begin{equation} \label{eq:eq_psi_phi}
    \psi(z) := \left\{\begin{array}{ll}
        \langle \zeta(\nabla h^*(z)), z \rangle - h(\zeta(\nabla h^*(z))) - \phi(\zeta(\nabla h^*(z))) \ \ &\text{if} \ \  z \in int \dom(h^*) \\
        \hspace{-3pt}+\infty &\hspace{-1pt}\text{otherwise}.
    \end{array}\right.
    \end{equation}
    When (ii) holds, $\phi$ can be chosen given $\psi$ with
    \begin{equation} \label{eq:eq_psi_phi_2}
    \phi(x) := \left\{\begin{array}{ll}
        \langle \zeta(y), \nabla h(y) \rangle - h(\zeta(y)) - \psi(\nabla h(y)) \ \ \text{for} \ y \in \zeta^{-1}(x) \ \ &\text{if} \ x \in Im(\zeta) \\
        \hspace{-3pt}+\infty \ \ &\text{otherwise}.
    \end{array}\right.
    \end{equation}
\end{proposition}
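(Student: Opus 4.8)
\textbf{Proof plan for Proposition~\ref{prop:charac_bregman}.}

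The plan is to mimic the structure of the Euclidean characterization of proximal operators (Moreau's identity / the Gribonval--Nikolova circle of ideas), but carried out through the change of variable $z = \nabla h(y)$, which converts the Bregman proximity problem into an ordinary convex-analytic statement about subgradients on $\mathrm{int}\,\dom(h^*)$. The central algebraic observation is the following rewriting of the objective in \eqref{eq:prox}: for fixed $y$, using $z = \nabla h(y)$,
\begin{equation*}
D_h(x,y) + \phi(x) = h(x) - h(y) - \langle z, x - y\rangle + \phi(x) = \bigl( h(x) - \langle z, x\rangle + \phi(x)\bigr) + \bigl(\langle z,y\rangle - h(y)\bigr),
\end{equation*}
where the second bracket does not depend on $x$. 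Hence $\zeta(y) \in \argmin_x \{D_h(x,y) + \phi(x)\}$ if and only if $\zeta(y)$ minimizes $x \mapsto (h+\phi)(x) - \langle z, x\rangle$, i.e.\ if and only if $z \in \partial(h+\phi)(\zeta(y))$. Since $y = \nabla h^*(z)$, this is exactly the assertion that $\zeta(\nabla h^*(z)) \in (\partial(h+\phi))^{-1}(z)$, and the inverse of a subdifferential is the subdifferential of the conjugate: writing $\psi := (h+\phi)^*$ we get the candidate l.s.c.\ proper function in (ii), with $\zeta(\nabla h^*(z)) \in \partial\psi(z)$ on $\mathrm{int}\,\dom(h^*)$. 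This gives (i)$\Rightarrow$(ii). I would then make the formula \eqref{eq:eq_psi_phi} explicit by noting that at a point where $z \in \partial(h+\phi)(\zeta(\nabla h^*(z)))$, the Fenchel equality holds, so $\psi(z) = (h+\phi)^*(z) = \langle \zeta(\nabla h^*(z)), z\rangle - h(\zeta(\nabla h^*(z))) - \phi(\zeta(\nabla h^*(z)))$, which matches the claimed expression. Convexity and lower semicontinuity of $\psi$ are automatic as a conjugate; one must only check it is proper, which follows since $\Im(\zeta) \subset \dom(\phi)\cap\dom(h)$ ensures $h+\phi \not\equiv +\infty$.

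For (ii)$\Rightarrow$(i), I would reverse the above. Given $\psi$ convex l.s.c.\ with $\zeta(\nabla h^*(z)) \in \partial\psi(z)$ for all $z \in \mathrm{int}\,\dom(h^*)$, define $\phi$ by \eqref{eq:eq_psi_phi_2}; the key point to verify is that $\phi$ is well-defined, i.e.\ that the right-hand side of \eqref{eq:eq_psi_phi_2} does not depend on the choice of preimage $y \in \zeta^{-1}(x)$. This is where one uses that $\zeta(\nabla h^*(z)) \in \partial\psi(z)$ is equivalent (by the subdifferential inversion for convex l.s.c.\ functions) to $z \in \partial\psi^*(\zeta(\nabla h^*(z)))$; substituting $z = \nabla h(y)$ gives $\nabla h(y) \in \partial\psi^*(\zeta(y))$, so at the point $x = \zeta(y)$ the Fenchel equality $\psi^*(x) = \langle x, \nabla h(y)\rangle - \psi(\nabla h(y))$ holds, and therefore the expression in \eqref{eq:eq_psi_phi_2} equals $\psi^*(x) - h(x)$, which is manifestly independent of $y$. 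With $\phi = \psi^* - h$ on $\Im(\zeta)$, we have $h + \phi = \psi^*$ there, and then for any $y$ with $z = \nabla h(y)$, the relation $\zeta(y) = \zeta(\nabla h^*(z)) \in \partial\psi(z) = (\partial\psi^*)^{-1}(z) = (\partial(h+\phi))^{-1}(z)$ shows $z \in \partial(h+\phi)(\zeta(y))$, which by the algebraic rewriting above is precisely $\zeta(y) \in \argmin_x\{D_h(x,y)+\phi(x)\}$. One should double-check the domain bookkeeping: $\Im(\zeta) \subset \dom(\phi)$ is needed for (i) and holds by construction since $\phi$ is finite on $\Im(\zeta)$.

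The main obstacle I anticipate is not the convex-analytic core — which is essentially Fenchel--Young plus subdifferential inversion — but the careful handling of domains and the Legendre-type hypotheses: one must ensure $\nabla h^*$ maps $\mathrm{int}\,\dom(h^*)$ bijectively onto $\mathrm{int}\,\dom(h)$ (so that the substitution $y \leftrightarrow z$ is licit on the stated domains), that $\Im(\zeta) \subset \dom(h)$ is genuinely used so that $D_h(\zeta(y),y)$ and $h(\zeta(y))$ are finite, and that the well-definedness of $\phi$ in \eqref{eq:eq_psi_phi_2} is argued cleanly when $\zeta$ is not injective. A secondary subtlety is that \eqref{eq:prox} as originally stated orders the Bregman divergence as $D_h(x,y)$ with $x$ the variable — opposite to the case covered directly by \cite[Theorem~3]{gribonval2020characterization} — which is exactly why the ``$h+\phi$'' combination (rather than a Moreau-envelope-type object) emerges here; I would flag this explicitly, since it is the reason a new proof is needed rather than a citation.
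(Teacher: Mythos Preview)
Your proposal is correct but takes a cleaner route than the paper. The paper first abstracts to a general lemma (Lemma~\ref{lem:general_div}) about divergences of the form $D(x,y)=a(y)-\langle x,A(y)\rangle+b(x)$ with $A$ a bijection, then specializes to $A=\nabla h$, $b=h$. For (i)$\Rightarrow$(ii) it defines $\rho$ pointwise by the formula~\eqref{eq:eq_psi_phi}, verifies the subgradient inequality $\rho(z')-\rho(z)\geq\langle\zeta(A^{-1}(z)),z'-z\rangle$ by a direct computation, and then passes to the lower convex envelope $\tilde\rho$ to obtain a genuinely l.s.c.\ convex $\psi$; for (ii)$\Rightarrow$(i) it defines $\eta(y)=\langle\zeta(y),\nabla h(y)\rangle-\psi(\nabla h(y))$ and checks well-definedness of $\phi(x)=\eta(y)-h(x)$ for $y\in\zeta^{-1}(x)$ via an increment inequality for $\eta$. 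Your Fenchel-conjugacy framing---taking $\psi=(h+\phi)^*$ in one direction and $\phi=\psi^*-h$ on $\Im(\zeta)$ in the other---sidesteps both the envelope step (conjugates are automatically l.s.c.\ convex) and the hand-checked well-definedness (the Fenchel identity $\psi^*(x)=\langle x,\nabla h(y)\rangle-\psi(\nabla h(y))$ does it at once). One small slip to fix: the equality $(\partial\psi^*)^{-1}=(\partial(h+\phi))^{-1}$ you invoke should only be an inclusion, since $h+\phi=\psi^*$ holds on $\Im(\zeta)$ but $h+\phi=+\infty$ elsewhere; however, $h+\phi\geq\psi^*$ with equality at $\zeta(y)$ gives $\partial\psi^*(\zeta(y))\subseteq\partial(h+\phi)(\zeta(y))$, which is all you need. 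The paper's abstraction buys a bit of generality (any bilinear-type divergence with bijective $A$); your approach buys transparency and brevity.
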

\begin{remark} \label{rem:after_extension_gribonval}
\cite[Corollary 5 a)]{gribonval2020characterization} with $\mathcal{Y} = int \dom(h)$ states that (i) is equivalent to
 \begin{itemize}
        \item[(iii)] There is a l.s.c $g : \mathbb{R}^n \to \mathbb{R}\cup \{ +\infty \}$ convex 
        such that $\nabla h(\zeta^{-1}(x)) \in \partial g(x)$ for each $x \in Im(\zeta)$.
    \end{itemize} 
However (ii) and (iii) are not equivalent. We show here that (ii) implies (iii) but the converse is not true. 
Let $\psi$ convex defined from (ii). Thanks to Legendre-Fenchel identity, we have that
\begin{align}
&\forall z \in int \dom(h^*), \ \zeta(\nabla h^*(z)) \in \partial \psi(z) \\
\Leftrightarrow  \ &\forall z \in int \dom(h^*), \ z \in \partial \psi^*(\zeta(\nabla h^*(z))) \\
\Leftrightarrow \ &\forall y \in int \dom(h), \ \nabla h(y) \in \partial \psi^*(\zeta(y)) \\
\Rightarrow \ &\forall x \in Im(\zeta), \ \nabla h(\zeta^{-1}(x)) \in \partial \psi^*(x) .
\end{align}
Therefore (ii) implies (iii) with $g = \psi^*$. However, the last lign is just an implication.
\end{remark}
\begin{proof}
We follow the same order of arguments from \citep{gribonval2020characterization}. We first prove a general result reminiscent to \cite[Theorem 3]{gribonval2020characterization} for a general form of divergence function and then apply this result to Bregman divergences.
\begin{lemma} \label{lem:general_div}
Let $a : \mathcal{Y} \subseteq \mathbb{R}^n \to \mathbb{R} \cup \{+\infty\}$, $b : \mathbb{R}^n \to \mathbb{R} \cup \{+\infty\}$, 
$A :  \mathcal{Y} \to  \mathcal{Z}$ bijection from $\mathcal{Y}$ to  $\mathcal{Z}$ (with $\mathcal{Z} \subset \mathbb{R}^n$). 
Consider $\zeta :  \mathcal{Y}  \to \mathbb{R}^n$. Let $D(x,y) := a(y)- \langle x,A(y) \rangle + b(x)$. Suppose $\Im(\zeta) \subset \dom(b)$. The following properties are equivalent:
\begin{itemize}
    \item[(i)] There is $\phi : \mathbb{R}^n \to \mathbb{R}\cup \{ +\infty \}$ such that $\Im(\zeta) \subset \dom(\phi)$ and for each $y \in \mathcal{Y}$
        \begin{equation}
            \zeta(y) \in \argmin_{x \in \mathbb{R}^n} \{ D(x,y) + \phi(x) \} .
        \end{equation}
        \item[(ii)] There is a l.s.c  $\psi : \mathbb{R}^n \to \mathbb{R}\cup \{ +\infty \}$ proper convex such that $\zeta(A^{-1}(z)) \in \partial \psi(z)$ for each $z \in \mathcal{Z}$. 
\end{itemize}
When (i) holds, $\psi$ can be chosen given $\phi$ with
\begin{equation} \label{eq:eq_psi_phi_3}
\psi(z) := \left\{\begin{array}{ll}
    \langle \zeta(A^{-1}(z)), z \rangle - b(\zeta(A^{-1}(z))) - \phi(\zeta(A^{-1}(z))) \ \ &\text{if} \ \  z \in \mathcal{Z} \\
    \hspace{-3pt}+\infty &\hspace{-1pt}\text{otherwise}.
\end{array}\right.
\end{equation}
When (ii) holds, $\phi$ can be chosen given $\psi$ with
\begin{equation} \label{eq:eq_phi_psi_3}
\phi(x) := \left\{\begin{array}{ll}
    \langle \zeta(y), A(y) \rangle - b(\zeta(y)) - \psi(A(y)) \ \ \text{for} \ y \in \zeta^{-1}(x) \ \ &\text{if} \ x \in Im(\zeta) \\
    \hspace{-3pt}+\infty \ \ &\text{otherwise}.
\end{array}\right.
\end{equation}
\end{lemma}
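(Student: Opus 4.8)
\textbf{Plan for the proof of Lemma~\ref{lem:general_div}.}
The statement is a "change-of-variables" reformulation of the Gribonval--Nikolova characterization~\cite[Theorem 3]{gribonval2020characterization}. The plan is to reduce to that theorem by absorbing the linear term $\langle x, A(y)\rangle$ into a fresh variable. Concretely, set $z = A(y) \in \mathcal Z$; since $A$ is a bijection from $\mathcal Y$ to $\mathcal Z$, we may write $y = A^{-1}(z)$ and define $\tilde\zeta(z) := \zeta(A^{-1}(z))$ on $\mathcal Z$, together with the inner product coupling $\langle x, z\rangle$. The minimization problem in (i) reads, for each $y$,
\begin{equation*}
\zeta(y) \in \argmin_{x} \{ a(y) - \langle x, A(y)\rangle + b(x) + \phi(x) \},
\end{equation*}
and since $a(y)$ does not depend on $x$, this is equivalent to $\tilde\zeta(z) \in \argmin_x \{ -\langle x, z\rangle + b(x) + \phi(x)\}$ for each $z \in \mathcal Z$. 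This is exactly the hypothesis of~\cite[Theorem 3]{gribonval2020characterization} applied with data-fidelity $(x,z)\mapsto b(x) - \langle x,z\rangle$ (plus the constant-in-$x$ piece, which is immaterial), regularizer $\phi$, and set of observations $\mathcal Z$.

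\textbf{Key steps, in order.} First I would state precisely the instance of~\cite[Theorem 3]{gribonval2020characterization} being used: there, the map $\tilde\zeta$ is an argmin of $b(\cdot) - \langle \cdot, z\rangle + \phi(\cdot)$ over $z$ in some set if and only if there is a proper l.s.c.\ convex $\psi$ with $\tilde\zeta(z) \in \partial\psi(z)$ for each such $z$, and the formula linking $\phi$ and $\psi$ is the Fenchel-type identity $\psi(z) = \langle \tilde\zeta(z), z\rangle - b(\tilde\zeta(z)) - \phi(\tilde\zeta(z))$ (respectively its inverse). Second, I would rewrite this identity in the original variables by substituting $z = A(y)$ and $\tilde\zeta(z) = \zeta(A^{-1}(z))$, obtaining exactly~\eqref{eq:eq_psi_phi_3}; and conversely, substituting $A(y)$ for $z$ with $y \in \zeta^{-1}(x)$, obtaining~\eqref{eq:eq_phi_psi_3}. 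Third, I would check that the hypothesis $\Im(\zeta) \subset \dom(b)$ (equivalently $\Im(\tilde\zeta)\subset\dom(b)$) ensures the expressions $b(\zeta(\cdot))$ appearing in the formulas are finite, so that $\psi$ and $\phi$ are well-defined (not identically $+\infty$ on the relevant sets), and that $\Im(\zeta)\subset\dom(\phi)$ translates correctly. Finally I would remark that when (ii) holds, the definition~\eqref{eq:eq_phi_psi_3} is independent of the choice of preimage $y \in \zeta^{-1}(x)$: this follows because $\zeta$ being an argmin forces, for two preimages $y_1,y_2$ of the same $x$, the values $\langle x, A(y_i)\rangle - b(x) - \psi(A(y_i))$ to coincide (both equal $-\phi(x)$ by the optimality condition), so well-definedness is automatic.

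\textbf{Expected main obstacle.} The routine part is the algebraic substitution; the part requiring care is the bookkeeping of domains and the well-definedness of $\phi$ via $\zeta^{-1}$. Since $\zeta$ need not be injective, $\zeta^{-1}(x)$ may contain several points, and one must verify that~\eqref{eq:eq_phi_psi_3} does not depend on the representative chosen — this is where the optimality property of $\zeta$ (through the subgradient inclusion $\tilde\zeta(z)\in\partial\psi(z)$ and the Legendre--Fenchel equality case) is genuinely used, rather than just the reference theorem as a black box. A secondary subtlety is making sure the constant-in-$x$ term $a(y)$ truly drops out: it affects neither the argmin in (i) nor the $\phi$--$\psi$ correspondence, but it must be explicitly noted that $D(x,y)+\phi(x)$ and $b(x)-\langle x,A(y)\rangle+\phi(x)$ have the same minimizers and that the formulas~\eqref{eq:eq_psi_phi_3}--\eqref{eq:eq_phi_psi_3} involve only $b$, not $a$. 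Once these points are settled, Proposition~\ref{prop:charac_bregman} follows by specializing $a(y) = h(y) - \langle \nabla h(y), y\rangle$, $b = h$, $A = \nabla h$, $\mathcal Y = int\dom(h)$, $\mathcal Z = int\dom(h^*)$, and using $A^{-1} = \nabla h^*$.
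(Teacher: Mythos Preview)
Your reduction via $z=A(y)$ is the right maneuver and correctly drops the $a(y)$ term, so that the question becomes whether $\tilde\zeta(z)\in\argmin_x\{(b+\phi)(x)-\langle x,z\rangle\}$ is equivalent to $\tilde\zeta(z)\in\partial\psi(z)$ for some proper l.s.c.\ convex $\psi$. The paper, however, does \emph{not} invoke \cite[Theorem~3]{gribonval2020characterization} as a black box: it remarks that this theorem ``does not directly apply'' in the present order of variables, and Remark~\ref{rem:after_extension_gribonval} observes that the characterization actually delivered by Gribonval--Nikolova (their Corollary~5) is the dual form $A(\zeta^{-1}(x))\subset\partial g(x)$, which the authors claim is not equivalent to~(ii). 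Accordingly the paper re-derives both implications by hand, following only the \emph{structure} of the cited proof: for (i)$\Rightarrow$(ii) it sets $\rho(z)=\langle\tilde\zeta(z),z\rangle-(b+\phi)(\tilde\zeta(z))$, verifies the subgradient inequality $\rho(z')-\rho(z)\ge\langle\tilde\zeta(z),z'-z\rangle$ directly from optimality of $\zeta$, and passes to the lower convex envelope using \cite[Proposition~3]{gribonval2020characterization}; for (ii)$\Rightarrow$(i) it defines $\eta(y)=\langle\zeta(y),A(y)\rangle-\psi(A(y))$, shows it is constant on fibers of $\zeta$, and sets $\phi=\eta\circ\zeta^{-1}-b$. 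Your route is shorter \emph{if} the linear-coupling instance with general $b$ is truly covered by the referenced Theorem~3 as stated---you should verify this, since the paper's authors evidently did not think so; otherwise you will end up reproving exactly what the paper proves.

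One concrete slip in your plan: your argument for the well-definedness of $\phi$ in the (ii)$\Rightarrow$(i) direction is circular. You say that for two preimages $y_1,y_2\in\zeta^{-1}(x)$ the values $\langle x,A(y_i)\rangle-b(x)-\psi(A(y_i))$ ``both equal $-\phi(x)$ by the optimality condition''---but at that point $\phi$ is what you are defining and (i) is what you are proving, so $\zeta$ is not yet known to be an argmin. The correct argument (the paper's) uses only (ii): writing $z_i=A(y_i)$ and using $\tilde\zeta(z_i)\in\partial\psi(z_i)$, the subgradient inequality gives $\eta(y_2)-\eta(y_1)\ge\langle\zeta(y_2)-\zeta(y_1),A(y_1)\rangle=0$ and symmetrically $\eta(y_1)-\eta(y_2)\ge 0$, hence $\eta(y_1)=\eta(y_2)$.
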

Before proving the Lemma, we can directly see that 
Proposition~\ref{prop:charac_bregman} is the specialization of Lemma~\ref{lem:general_div} with Bregman divergences. Given $h$ of Legendre-type, the divergence $D(x,y) = a(y)-\langle x,A(y) \rangle + b(x)$ becomes the Bregman divergence $D_h(x,y)$ defined in~\eqref{eq:bregman_div} for $\mathcal{Y} = int \dom(h)$, $\mathcal{Z} = int \dom(h^*)$, $A(y) := \nabla h(y)$, $a(y) := \langle \nabla h(y), y \rangle - h(y)$ and $b(x) := h(x) \ \ \text{if} \ \ x \in dom(h)  \ \ \text{and} \ \ +\infty \ \ \text{otherwise}$. For $h$ of Legendre-type, $\nabla h$  is a bijection from $int \dom(h)$ to $int \dom(h^*)$ and ${(\nabla h)^{-1} = \nabla h^*}$.  \end{proof}
\begin{proof}
    We now prove Lemma~\ref{lem:general_div}. We follow the same arguments as the proof of \cite[Theorem 3(c)]{gribonval2020characterization}. 
    \textbf{(i) $\Rightarrow$ (ii) :} \\
    Define 
    \begin{equation}
        \rho(z) = 
        \left\{\begin{array}{ll}
         \langle \zeta(A^{-1}(z)), z \rangle - b(\zeta(A^{-1}(z))) - \phi(\zeta(A^{-1}(z))) \ \ &\text{if} \ \  z \in \mathcal{Z} \\
         + \infty  \ \ &\text{else} .
        \end{array}\right.
    \end{equation}
    As we assume $\Im(\zeta) \subset \dom(b)$ and $\Im(\zeta) \subset \dom(\psi)$, we have $\rho : \mathbb{R}^n \to \mathbb{R} \cup \{+\infty\}$ and $\dom(\rho) = \mathcal{Z}$.
    Let $z \in  \mathcal{Z}$ and $y = A^{-1}(z)$. From (i), $\zeta(y)$ is the global minimizer of $x \to D(x,y) + \phi(x)$ or of $x \to - \langle x,A(y) \rangle + b(x) + \phi(x)$ and $\forall z' \in \mathcal{Z}, y' = A^{-1}(z')$, 
    \begin{equation}
        \begin{split}
            \rho(z') - \rho(z) &= \langle \zeta(A^{-1}(z')), z' \rangle - b(\zeta(A^{-1}(z'))) - \phi(\zeta(A^{-1}(z'))) \\
            &\quad - \langle \zeta(A^{-1}(z)), z \rangle + b(\zeta(A^{-1}(z))) + \phi(\zeta(A^{-1}(z)))\\
            &= \langle \zeta(y'), A(y') \rangle - b(\zeta(y')) - \phi(\zeta(y'))
            - \langle \zeta(y), A(y) \rangle + b(\zeta(y)) + \phi(\zeta(y))\\
            &= \langle \zeta(y), A(y') - A(y) \rangle + \langle \zeta(y'), A(y') \rangle - b(\zeta(y')) - \phi(\zeta(y')) \\
            &\quad - \langle \zeta(y), A(y') \rangle + b(\zeta(y)) + \phi(\zeta(y)) \\
            &\geq \langle \zeta(y), A(y') - A(y) \rangle = \langle \zeta(A^{-1}(z)), z' - z \rangle .
        \end{split}
    \end{equation}
By definition of the subdifferential, this shows that 
\begin{equation}
    \zeta(A^{-1}(z)) \in \partial \rho(z) .
\end{equation}
Let $\tilde \rho$ the lower convex envelope of $\rho$ (pointwise supremum of all the convex l.s.c functions below~$\rho$). $\tilde \rho$ is proper convex l.s.c. and $\forall z \in  \mathcal{Z}, \partial \rho(z) \neq \emptyset$. By \cite[Proposition 3]{gribonval2020characterization}, $\forall z \in  \mathcal{Z}$, $\rho(z) = \tilde \rho(z)$ and $\partial \rho(z) = \partial \tilde \rho(z)$. Thus for $\psi = \tilde \rho$, we get (ii).

\textbf{(ii) $\Rightarrow$ (i) :} \\
Define $\eta : \mathcal{Y} \to \mathbb{R}$ by 
\begin{equation}
    \eta(y) := \langle \zeta(y), A(y) \rangle - \psi(A(y)).
\end{equation}
The previous definition is valid because by (ii), $\Im(A) = \mathcal{Z} \subset \dom(\partial \psi) \subset \dom(\psi)$ and therefore $\psi(A(y)) < +\infty$.
By (ii), $\forall z,z' \in \mathcal{Z}$, 
\begin{equation}
\psi(z) - \psi(z') \geq 
\langle \zeta(A^{-1}(z')), z - z' \rangle
\end{equation}
which gives $\forall y,y' \in \mathcal{Y}$, 
\begin{equation}
\psi(A(y)) - \psi(A(y')) \geq 
\langle \zeta(y'), A(y) - A(y') \rangle .
\end{equation}
This yields
\begin{equation} \label{eq:ineq_gamma}
\begin{split}
\eta(y') - \eta(y) &=
\langle \zeta(y'), A(y') \rangle - \psi(A(y')) - \langle \zeta(y), A(y) \rangle + \psi(A(y)) \\
&\geq \langle \zeta(y') - \zeta(y), A(y) \rangle .
\end{split}
\end{equation}
We define $\theta : \mathbb{R}^n \to \mathbb{R} \cup \{+ \infty \}$ 
obeying $\dom(\theta) =\Im(\zeta)$ with
\begin{equation}
\theta(x) := \left\{\begin{array}{ll}
    \eta(y) \ \text{for} \ y \in \zeta^{-1}(x) \ &\hspace{-1pt}\text{if} \ x \in Im(\zeta) \\
    \hspace{-3pt}+\infty &\hspace{-1pt}\text{otherwise}.
\end{array}\right.
\end{equation}
For $y,y' \in \zeta^{-1}(x)$, as $\zeta(y') = \zeta(y)$, we have by \eqref{eq:ineq_gamma} $\eta(y') - \eta(y) \geq 0$ and $\eta(y) - \eta(y') \geq 0$ and thus $\eta(y')=\eta(y)$. The definition of $\theta$ is thus independent of the choice of $ y \in \zeta^{-1}(x)$.

For $x' \in Im(\zeta)$, $x' = \zeta (y')$. Using the previous inequality with $\eta$, we get
\begin{equation}
\begin{split}
\theta(x') - \theta(\zeta (y)) &= \theta(\zeta (y')) - \theta(\zeta(y)) \\
&= \eta(y') - \eta(y) \\
&\geq \langle \zeta(y') - \zeta(y), A(y) \rangle\\
&= \langle x' - \zeta(y), A(y) \rangle ,
\end{split}
\end{equation}
that is to say, $\forall x' \in Im(\zeta)$
\begin{equation}
\theta(x') - \langle x',A(y) \rangle \geq \theta(\zeta(y)) - \langle \zeta(y),A(y) \rangle .
\end{equation}
Given the definition of $\theta$, this is also true for $x' \notin Im(\zeta)$. 

We set $\phi = \theta - b$. As $\Im(\zeta) \subset \dom(b)$, $b(\zeta(y)) < +\infty$ and $\phi : \mathbb{R}^n \to \mathbb{R} \cup \{+ \infty \}$. Adding $a(y)$ on both sides, we get 
\begin{equation}
\begin{split}
    \forall x' \in \mathbb{R}^n, \ \ &b(x') + \phi(x') - \langle x',A(y) \rangle + a(y) \geq b( \zeta(y)) + \phi( \zeta(y)) - \langle \zeta(y),A(y) \rangle + a(y) . \\
   & \Leftrightarrow \ \forall x' \in \mathbb{R}^n, \ \  \phi(x') + D(x',y) \geq \phi(\zeta(y)) + D(\zeta(y),y) \\
     &\Leftrightarrow \ \  \zeta(y) \in \argmin_x \phi(x) + D(x,y) 
\end{split}
\end{equation}
As this is true for all $y \in \mathcal{Y}$, we get the desired result.
\end{proof}

Eventually, Proposition~\ref{prop:prox_bregman} is a  direct application of Proposition~\ref{prop:charac_bregman} with $\zeta = \mathcal{B}_\gamma : \mathbb{R}^n \to  \mathbb{R}^n$  defined on $int \dom(h)$ by
\begin{equation} \label{eq:BSD_as_grad}
    \mathcal{B}_\gamma(y) = \nabla( \psi_\gamma \circ \nabla h^*) \circ \nabla h(y) .
\end{equation}
The function $\mathcal{B}_\gamma$ verifies $\forall z \in int \dom(h^*)$,
\begin{equation}
\mathcal{B}_\gamma(\nabla h^*(z)) = \nabla( \psi_\gamma \circ \nabla h^*)(z)
\end{equation}
and $\psi = \psi_\gamma \circ \nabla h^*$ is assumed convex on $int \dom(h^*)$.
From Proposition~\ref{prop:charac_bregman}, we get that there is ${\phi_\gamma : \mathbb{R}^n \to \mathbb{R}\cup \{ +\infty \}}$ such that for each $y \in int \dom(h)$
\begin{equation}
   \mathcal{B}_\gamma(y) \in  \argmin \{ D_h(x,y) + \phi_\gamma(x) \}.
\end{equation} 

Moreover, for $y \in int \dom(h)$,
\begin{align} 
    \mathcal{B}_\gamma(y) &= \nabla( \psi_\gamma \circ \nabla h^*) \circ \nabla h(y) \\
    &= \nabla^2 h^*(\nabla h(y)).  \nabla  \psi_\gamma \circ \nabla h^* \circ \nabla h(y) \\
    &= \nabla^2 h^*(\nabla h(y)) . \nabla  \psi_\gamma(y) .\label{eq:BSD_as_grad2}
\end{align}
As $h$ is assumed strictly convex on $int \dom(h)$, for $y \in int \dom(h)$, the Hessian of $h$, denoted as $\nabla^2 h(y)$ is invertible. 
By differentiating
\begin{equation}
    \nabla h^* (\nabla h(y)) = y
\end{equation}
we get
\begin{equation}
    \nabla^2 h^*(\nabla h(y)) = (\nabla^2 h(y))^{-1} ,
\end{equation}
so that 
\begin{equation}
\mathcal{B}_\gamma(y) = (\nabla^2 h(y))^{-1}.\nabla \psi_\gamma(y). 
\end{equation}
With the definition~\eqref{eq:psi_gamma}, 
we directly get 
\begin{equation}
\mathcal{B}_\gamma(y) = (\nabla^2 h(y))^{-1}.\nabla \psi_\gamma(y) = y -  (\nabla^2 h(y))^{-1}.\nabla g_\gamma(y) .
\end{equation}
Finally, Proposition~\ref{prop:charac_bregman} also indicates that $\phi_\gamma$ can be chosen given $\psi_\gamma$ with
\begin{equation}
    \phi_\gamma(x) := \left\{\begin{array}{ll}
        \langle \mathcal{B}_\gamma(y), \nabla h(y) \rangle - h(\mathcal{B}_\gamma(y)) - \psi_\gamma \circ \nabla h^* (\nabla h(y)) \ \ \text{for} \ y \in \mathcal{B}_\gamma^{-1}(x) \ \ &\text{if} \ x \in Im(\mathcal{B}_\gamma) \\
        \hspace{-3pt}+\infty \ \ &\text{otherwise}.
    \end{array}\right.
    \end{equation}
This gives $\forall y \in int \dom(h)$,
\begin{equation} 
\begin{split}
\phi_\gamma(\mathcal{B}_\gamma(y)) &= \langle \mathcal{B}_\gamma(y), \nabla h (y) \rangle - h(\mathcal{B}_\gamma(y)) -  \psi_\gamma \circ \nabla h^*( \nabla h (y))\\
 &= \langle \mathcal{B}_\gamma(y) - y, \nabla h (y) \rangle  - h(\mathcal{B}_\gamma(y)) + h(y) + \langle y, \nabla h (y) \rangle - h(y) -  \psi_\gamma( y) \\
 &= -D_h(\mathcal{B}_\gamma(y),y) + \langle y, \nabla h (y) \rangle - h(y) - \psi_\gamma( y) \\
 &=  - D_h(\mathcal{B}_\gamma(y),y)  + g_\gamma(y) .
 \end{split} 
\end{equation}

\section{The Bregman Proximal Gradient (BPG) algorithm}
\label{app:BPG_algorithm}

\subsection{Convergence analysis of the nonconvex BPG algorithm} 
\label{app:convergence_BPG}

We study in this section the convergence of the BPG algorithm 
\begin{equation}\label{eq:BPG2}
        x^{k+1} \in T_\tau(x_k) = \argmin_{x \in \mathbb{R}^n} \{ \mathcal{R}(x) + \langle x-x^k, \nabla F(x^k) \rangle 
 + \frac{1}{\tau} D_h(x,x^k) \} .
\end{equation}
for minimizing $\Psi = F + \mathcal{R}$ with nonconvex functions $F$ and/or $\mathcal{R}$.

For the rest of the section, we take the following general assumptions.
\begin{assumption} \label{ass:first_ass} ~
\begin{itemize}
    \item[(i)] $h : \mathbb{R}^n \to \mathbb{R}$ is of Legendre-type. 
    \item[(ii)] $F : \mathbb{R}^n \to \mathbb{R}$  is proper, $\mathcal{C}^1$ on $int \dom(h)$, with $\dom(h) \subset \dom(F)$.
    \item[(iii)] $\mathcal{R}: \mathbb{R}^n \to \mathbb{R}$ is proper, lower semi-continuous with $\dom \mathcal{R} \cap int \dom(h) \neq \emptyset$.
    \item[(iv)] $\Psi = F + \mathcal{R}$ is lower-bounded, coercive and verifies the Kurdyka-Lojasiewicz (KL) property (defined in Appendix~\ref{app:KL}).
     \item[(v)] For $x \in int \dom(h)$, $T_\tau (x)$ is nonempty and included in $int \dom(h)$.
\end{itemize}
\end{assumption} 
Note that, since $\mathcal{R}$ is nonconvex, the mapping $T_\tau$ is not in general single-valued. 

Assumption (v) is required for the algorithm to be well-posed. As shown in \citep{bauschke2017descent, bolte2018first}, one sufficient condition for $T_\tau (x) \neq \emptyset$ is the supercoercivity of function $h + \lambda \mathcal{R}$ for all $\lambda>0$, that is $\lim_{\norm{x}\to +\infty} \frac{h(x) + \lambda \mathcal{R}(x)}{\norm{x}} = +\infty$. 
As $T_\tau(x) \subset \dom(h)$, $T_\tau(x) \subset int \dom(h)$ is  true when $\dom(h)$ is open (which is the case for Burg's entropy for example). 

The convergence of the BPG algorithm in the nonconvex setting  is studied by the authors of~\citep{bolte2018first}. 
Under the main assumption that \text{$Lh-F$ is convex on $int \dom(h)$}, they show first the sufficient decrease property (and thus convergence) of the function values, and second, global convergence of the iterates. 
However, as we will develop in Appendix~\ref{app:proof_BPnP_poisson}, $Lh-F$ is not convex on the full domain of $int \dom(h)$ but only on the compact subset $\dom(\mathcal{R})$.

One can verify that all the iterates \eqref{eq:BPG2} belong to the convex set 
\begin{equation}
    T_\tau(x) \subset \text{Conv}(\dom \mathcal{R}) \cap int \dom(h),
\end{equation}
where $\text{Conv}(E)$ stands for the convex envelope of $E$.  We argue that it is enough to assume $Lh-F$ convex on this convex subset with the following assumption.
\begin{assumption} \label{ass:conv_ass}
There is $L>0$ such that, $Lh - F$ is convex on $\text{Conv}(\dom \mathcal{R}) \cap int \dom(h)$.
\end{assumption} 
We can now prove a result similar than~\cite[Proposition 4.1]{bolte2018first}.
\begin{proposition}\label{prop:BPG_1} 
Under Assumptions~\ref{ass:first_ass} and~\ref{ass:conv_ass}, let $(x^k)_{k \in \mathbb{N}}$ be a sequence generated by \eqref{eq:BPG_proj} with $0 < \tau L < 1$. 
Then the following properties hold 
\begin{itemize}
    \item[(i)] $(\Psi(x^k))_{k \in \mathbb{N}}$ is non-increasing and converges.
    \item[(ii)] $\sum_k D_h(x^{k+1},x^{k}) < \infty$ and $\min_{0 \leq k \leq K} D_h(x^{k+1},x^{k}) = O(1/K)$.
\end{itemize}
\end{proposition}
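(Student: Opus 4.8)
The plan is to follow the structure of the nonconvex analysis in \citep{bolte2018first}, but localized to the convex set $\text{Conv}(\dom \mathcal{R}) \cap int \dom(h)$, which is where all the iterates live: by Assumption~\ref{ass:first_ass}(v) the points $x^k$ belong to $int \dom(h)$, and since each $x^{k+1}$ is a minimizer of the BPG subproblem \eqref{eq:BPG2} it must lie in $\dom \mathcal{R}$, hence in its convex hull. The two ingredients are (a) an extended descent inequality coming from Assumption~\ref{ass:conv_ass}, and (b) the optimality of $x^{k+1}$ in \eqref{eq:BPG2}. For (a), I would record that if $Lh - F$ is convex on a convex set $S$, then for all $x,y \in S$ with $y \in int \dom(h)$,
\[ F(x) \le F(y) + \langle \nabla F(y), x - y\rangle + L\, D_h(x,y), \]
which is just the subgradient inequality for $Lh - F$ at $y$, rearranged using the definition \eqref{eq:bregman_div} of $D_h$. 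For (b), since $x^{k+1}$ minimizes $x \mapsto \mathcal{R}(x) + \langle x - x^k, \nabla F(x^k)\rangle + \tfrac{1}{\tau} D_h(x,x^k)$, comparing its value at $x^{k+1}$ with its value at the competitor $x = x^k$ (using $D_h(x^k,x^k) = 0$) gives
\[ \mathcal{R}(x^{k+1}) + \langle x^{k+1} - x^k, \nabla F(x^k)\rangle + \tfrac{1}{\tau} D_h(x^{k+1},x^k) \le \mathcal{R}(x^k). \]

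Then I would combine the two: apply (a) with $x = x^{k+1}$, $y = x^k$ to bound $F(x^{k+1})$, use (b) to eliminate the inner product $\langle x^{k+1} - x^k, \nabla F(x^k)\rangle$, and add $\mathcal{R}(x^{k+1})$ to both sides. This yields the sufficient decrease estimate
\[ \Psi(x^{k+1}) \le \Psi(x^k) - \left(\tfrac{1}{\tau} - L\right) D_h(x^{k+1},x^k), \]
and since $\tau L < 1$ the coefficient $\tfrac{1}{\tau} - L$ is strictly positive. Hence $(\Psi(x^k))$ is non-increasing, and being bounded below by Assumption~\ref{ass:first_ass}(iv) it converges, which proves (i). For (ii), I would telescope: summing the sufficient decrease inequality over $k = 0,\dots,K$ gives $\left(\tfrac{1}{\tau} - L\right)\sum_{k=0}^K D_h(x^{k+1},x^k) \le \Psi(x^0) - \Psi(x^{K+1}) \le \Psi(x^0) - \inf \Psi < \infty$; letting $K \to \infty$ yields $\sum_k D_h(x^{k+1},x^k) < \infty$, and bounding the minimum by the average of the first $K+1$ terms gives $\min_{0 \le k \le K} D_h(x^{k+1},x^k) \le \tfrac{C}{K+1} = O(1/K)$ with $C = (\Psi(x^0) - \inf\Psi)/(\tfrac{1}{\tau} - L)$.

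The main obstacle is not the algebra but justifying the localization: one must be careful that $x^k, x^{k+1}$ genuinely lie in $\text{Conv}(\dom \mathcal{R}) \cap int \dom(h)$ before invoking the convexity of $Lh - F$ there — this is precisely why the statement bundles the well-posedness hypothesis $T_\tau(x) \subset int\dom(h)$ of Assumption~\ref{ass:first_ass}(v) with the remark that any minimizer of the BPG subproblem necessarily lies in $\dom \mathcal{R}$. A secondary point is checking that all quantities involved are finite (in particular $\mathcal{R}(x^{k+1}) < +\infty$, which holds because $x^{k+1}$ minimizes an objective containing $\mathcal{R}$, and $F$ is finite on $\dom(h)$, which contains the iterates) so that the rearrangements are valid; once this is in place the argument is essentially a one-line telescoping computation.
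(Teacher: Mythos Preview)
Your proposal is correct and follows essentially the same approach as the paper's proof: the paper also localizes the NoLip condition to $\text{Conv}(\dom \mathcal{R}) \cap int \dom(h)$, derives the extended descent lemma $D_F(x,y) \le L D_h(x,y)$ there (framed via a short lemma characterizing $\mathcal{C}^1$ convexity on a convex subset by nonnegativity of the Bregman divergence, which is equivalent to your ``subgradient inequality for $Lh-F$''), combines it with the optimality of $x^{k+1}$ in \eqref{eq:BPG2} to obtain the same sufficient decrease $\Psi(x^{k+1}) - \Psi(x^k) \le (L - \tfrac{1}{\tau}) D_h(x^{k+1},x^k)$, and then telescopes exactly as you describe.
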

\begin{proof}
We adapt here the proof from \citep{bolte2018first} to the case where $Lh - F$ is not globally convex but only convex on the convex subset $\text{Conv}(\dom \mathcal{R}) \cap int \dom(h)$.

\paragraph{Sufficient decrease property}

We first show that the sufficient decrease property of $\Psi(x_k)$ holds. This is true because the following characterisation of $\mathcal{C}^1$ convex functions holds on a convex subset. We recall this classical proof for the sake of completeness.
\begin{lemma}
    Let $f : \mathcal{X} \to  \mathbb{R}^n$ be of class $\mathcal{C}^1$, then $f$ is locally convex on $C$ a convex subset of $\mathcal{X} = dom(f)$ if and only if $\forall x,y \in C$, $D_{f}(x,y) = f(x) - f(y) - \langle \nabla f(y), x-y \rangle \geq 0$. 
\end{lemma}
\begin{proof}
$\Rightarrow$ Let $x,y \in C$, for all $t \in (0,1)$, $x + t(y-x) \in C$ and by convexity of $f$
\begin{equation}
        f(y + t(x-y)) \leq f(y) + t(f(x)-f(y)), 
\end{equation}
i.e. 
\begin{equation}
        \frac{f(y + t(x-y)) - f(y)}{t}  \leq f(x)-f(y)
\end{equation}
and
\begin{equation}
\langle \nabla f(y), x-y \rangle = \lim_{t \to 0^+} \frac{f(y + t(x-y)) - f(y)}{t} \leq f(x)-f(y).
\end{equation}
$\Leftarrow$ Let $x,y \in C$, $t \in (0,1)$ and $z = y + t(x-y) \in C$. We have 
\begin{align}
    f(y) &\geq f(z) + \langle \nabla f(z), y-z \rangle\\
    f(x) &\geq f(z) + \langle \nabla f(z), x-z \rangle .
\end{align}
Combining both equations gives 
\begin{equation}
\begin{split}
    tf(x) + (1-t)f(y) &\leq t (f(z) + \langle \nabla f(z), x-z) + (1-t) ( f(z) + \langle \nabla f(z), y-z \rangle) \\
    &= f(z) + \langle \nabla f(z), tx + (1-t)y - z\rangle \\
    &= f(tx + (1-t)y) .
\end{split}
\end{equation}  
\end{proof}
Therefore, we have $Lh-F$ convex on $C$, if and only if, $\forall x,y \in C$, $D_{Lh-F}(x,y) \geq 0$, \emph{i.e.} $D_F (x, y) \leq LD_h (x, y)$.

The rest of the proof is identical to the one of~\citep{bolte2018first} and we recall it here.
Given the optimality conditions in~\eqref{eq:BPG2}, all the iterates $x_k \in C$ satisfy
\begin{equation}
\mathcal{R}(x^{k+1}) + \langle x^{k+1}-x^{k}, \nabla F(x_k) \rangle + \frac{1}{\tau}D_h(x^{k+1},x^{k}) \leq \mathcal{R}(x^{k}) 
\end{equation} 
using $D_F (x, y) \leq LD_h (x, y)$, we get
\begin{equation} \label{eq:decrease}
\begin{split}
   (\mathcal{R}(x^{k+1}) + F(x^{k+1}))- (\mathcal{R}(x^{k}) + F(x^{k})) &\leq -\frac{1}{\tau}D_h(x^{k+1},x^{k}) + LD_h(x^{k+1},x^{k}) \\
    &= (L-\frac{1}{\tau})D_h(x^{k+1},x^{k}) \leq 0
\end{split}
\end{equation} 
which, together with the fact that $\Psi$ is lower bounded, proves (i). Summing the previous inequality from $k=0$ to $K-1$ gives 
\begin{equation}
0 \leq \sum_{k=0}^{K-1} D_h(x^{k+1},x^{k})  \leq \frac{\tau}{1-\tau L} (\Psi(x_0) - \Psi(x_K)) \leq \frac{\tau}{1-\tau L} (\Psi(x_0) - \inf_{x \in C} \Psi(x)) < + \infty .
\end{equation} 
Thus $(D_h(x^{k+1},x^{k}))_k$ is summable and converges to $0$ when $k \to + \infty$. Finally 
\begin{equation}
\min_{0 \leq k \leq K} D_h(x^{k+1},x^{k}) \leq \frac{1}{K+1}  \sum_{k=0}^{K} D_h(x^{k+1},x^{k}) \leq \frac{1}{K+1} \frac{\tau}{1-\tau L} (\Psi(x_0) - \inf_{x \in C} \Psi(x)).
\end{equation} 
\end{proof}
To prove global convergence of the iterates upon the Kurdyka-Lojasiewicz (KL) property, \cite[Theorem 4.1]{bolte2018first} is based on the hypotheses (a) $\dom(h) = \mathbb{R}^n$ and $h$ is strongly convex on $\mathbb{R}^n$ and (b)  $\nabla h$ and $\nabla F$ are Lipschitz continuous on any bounded subset of $\mathbb{R}^n$. These assumptions are clearly not verified for $h$ being the Burg's entropy \eqref{eq:burgs} or $F$ the Poisson data-fidelity term \eqref{eq:poisson_f}. Indeed, in that case, $\dom(h) = \mathbb{R}^n_{++}$, and $h$ is strongly convex only on bounded sets. Moreover $F$ and $h$ are not Lipschitz continuous near $0$. 

However, thanks to the proven decrease of the iterates 
and as $\Psi$ is assumed coercive, the iterates remain bounded. 
We can adopt the following weaker assumptions to ensure that the iterates do not tend to $+\infty$ or $0$.
\vspace{0.5cm}
\begin{assumption} \label{ass:second_ass} ~
\begin{itemize}
\item[(i)] $h$ is strongly convex on any bounded convex subset of its domain.
\item[(ii)] For all $\alpha>0$, $\nabla h$ and $\nabla F$ are Lipschitz continuous on $\{ \Psi(x) \leq \alpha \}$.
\end{itemize}
\end{assumption}
Under these assumptions, we prove the equivalent of \cite[Theorem 4.1]{bolte2018first}.
\begin{thm}\label{thm:BPG_2} 
Under Assumption~\ref{ass:first_ass}, \ref{ass:conv_ass} and~\ref{ass:second_ass}, the sequence $(x^k)_{k \in \mathbb{N}}$ generated by \eqref{eq:BPG_proj} with ${0 < \tau L < 1}$ converges to a critical point of $\Psi$. 
\end{thm}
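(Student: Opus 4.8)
The plan is to follow the by-now standard three-ingredient scheme for proving convergence of descent methods under a Kurdyka--Łojasiewicz (KL) condition (as in~\citep{attouch2013convergence,bolte2018first}), but carried out on a sublevel set rather than on all of $\mathbb{R}^n$, so as to cope with the fact that here $h$ is only strongly convex on bounded sets and $\nabla h,\nabla F$ are only locally Lipschitz.

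\emph{Step 1 (localization and square-summability of the steps).} Proposition~\ref{prop:BPG_1} already gives that $(\Psi(x^k))_k$ is non-increasing, hence converges to some $\Psi^\star$, and that $\sum_k D_h(x^{k+1},x^k)<\infty$. Since $\Psi$ is coercive and $\Psi(x^k)\le\Psi(x^0)$ for every $k$, the whole sequence lies in the compact set $S:=\{x:\Psi(x)\le\Psi(x^0)\}$; by Assumption~\ref{ass:second_ass}(ii) $\nabla h$ and $\nabla F$ are Lipschitz on $S\subset int\dom(h)$ with constants $L_h,L_F$, and by Assumption~\ref{ass:second_ass}(i) $h$ is $\mu$-strongly convex on the bounded convex set $\mathrm{Conv}(S)$, so that $D_h(x,y)\ge\tfrac{\mu}{2}\norm{x-y}^2$ for all $x,y\in S$. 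Combining this with $D_h(x^{k+1},x^k)\to0$ yields $\norm{x^{k+1}-x^k}\to0$ and $\sum_k\norm{x^{k+1}-x^k}^2<\infty$; together with~\eqref{eq:decrease} it also gives the sufficient decrease in the form $\Psi(x^k)-\Psi(x^{k+1})\ge c\,\norm{x^{k+1}-x^k}^2$ with $c:=\tfrac{(1-\tau L)\mu}{2\tau}$.

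\emph{Step 2 (a subgradient upper bound).} From the optimality condition of the subproblem~\eqref{eq:BPG2} defining $x^{k+1}$ there is $r^{k+1}\in\partial\mathcal{R}(x^{k+1})$ with $r^{k+1}+\nabla F(x^k)+\tfrac1\tau(\nabla h(x^{k+1})-\nabla h(x^k))=0$. Setting $w^{k+1}:=\nabla F(x^{k+1})-\nabla F(x^k)+\tfrac1\tau(\nabla h(x^k)-\nabla h(x^{k+1}))$ we get $w^{k+1}\in\nabla F(x^{k+1})+\partial\mathcal{R}(x^{k+1})=\partial\Psi(x^{k+1})$, and the Lipschitz bounds of Step~1 give $\norm{w^{k+1}}\le b\,\norm{x^{k+1}-x^k}$ with $b:=L_F+L_h/\tau$; in particular $\mathrm{dist}(0,\partial\Psi(x^{k+1}))\to0$.

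\emph{Step 3 (limit set and criticality).} Let $\omega$ be the set of accumulation points of $(x^k)$: it is nonempty and compact by Step~1, connected because $\norm{x^{k+1}-x^k}\to0$, and $\omega\subset S\subset int\dom(h)$. Along a subsequence $x^{k_j}\to\bar x$, testing the minimizing inequality for $x^{k_j+1}$ at the feasible point $\bar x$ and letting $j\to\infty$, and using lower semicontinuity of $\mathcal{R}$, continuity of $F$, $\nabla F$ and of $D_h$ on $int\dom(h)\times int\dom(h)$, and $D_h(x^{k_j+1},x^{k_j})\to0$, $D_h(\bar x,x^{k_j})\to0$, one obtains $\mathcal{R}(x^{k_j+1})\to\mathcal{R}(\bar x)$, hence $\Psi(x^{k_j+1})\to\Psi(\bar x)$; since $\Psi(x^k)\to\Psi^\star$ this shows $\Psi\equiv\Psi^\star$ on $\omega$. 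Moreover $w^{k_j+1}\to0$, $x^{k_j+1}\to\bar x$ and $\Psi(x^{k_j+1})\to\Psi(\bar x)$, so the closedness property of the limiting subdifferential~\eqref{eq:subdiff} gives $0\in\partial\Psi(\bar x)$: every accumulation point of the sequence is a critical point of $\Psi$.

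\emph{Step 4 (finite length via KL).} If $\Psi(x^{k_0})=\Psi^\star$ for some $k_0$, the sufficient decrease forces $x^{k+1}=x^k$ for $k\ge k_0$ and we are done; so assume $\Psi(x^k)>\Psi^\star$ for all $k$. Apply the uniformized KL property of $\Psi$ (a consequence of Assumption~\ref{ass:first_ass}(iv) and the compactness/connectedness of $\omega$ with $\Psi\equiv\Psi^\star$ on $\omega$): there are $\varepsilon,\eta>0$ and a concave $\mathcal{C}^1$ function $\varphi$ with $\varphi'>0$ such that $\varphi'(\Psi(x)-\Psi^\star)\,\mathrm{dist}(0,\partial\Psi(x))\ge1$ whenever $\mathrm{dist}(x,\omega)<\varepsilon$ and $\Psi^\star<\Psi(x)<\Psi^\star+\eta$, and for $k$ large $x^k$ satisfies both conditions. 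Combining, for such $k$, the concavity estimate $\varphi(\Psi(x^k)-\Psi^\star)-\varphi(\Psi(x^{k+1})-\Psi^\star)\ge\varphi'(\Psi(x^k)-\Psi^\star)(\Psi(x^k)-\Psi(x^{k+1}))$ with Step~1 and the KL inequality together with Step~2 at index $k$ gives $\norm{x^{k+1}-x^k}^2\le\tfrac{b}{c}\norm{x^k-x^{k-1}}\,\Delta_k$, where $\Delta_k:=\varphi(\Psi(x^k)-\Psi^\star)-\varphi(\Psi(x^{k+1})-\Psi^\star)$. The arithmetic--geometric inequality yields $\norm{x^{k+1}-x^k}\le\tfrac12\norm{x^k-x^{k-1}}+\tfrac{b}{2c}\Delta_k$, and summing over $k$ telescopes the $\Delta_k$-terms to give $\sum_k\norm{x^{k+1}-x^k}<\infty$. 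Hence $(x^k)$ is a Cauchy sequence, converges to some $x^\star\in\omega$, and $x^\star$ is a critical point of $\Psi$ by Step~3.

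\emph{Main obstacle.} Step~4 is essentially mechanical once the ingredients are in place; the real work, and the point where this statement goes beyond~\citep{bolte2018first}, is in Steps~1 and~3: one must ensure that the iterates and all their accumulation points remain in a fixed compact subset of $int\dom(h)$ — bounded away both from infinity (coercivity of $\Psi$ plus monotone decrease) and from the boundary of $\dom(h)$ (so that $\nabla h$, $\nabla F$, $D_h$ and the strong-convexity modulus of $h$ are uniformly controlled, which is exactly what Assumption~\ref{ass:second_ass} provides) — and that the restricted convexity of Assumption~\ref{ass:conv_ass}, valid only on $\mathrm{Conv}(\dom\mathcal{R})\cap int\dom(h)$, is enough because all iterates live there. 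The limit-passing in Step~3, where $\mathcal{R}$ is merely lower semicontinuous and $T_\tau$ is multivalued, also requires care on that compact set.
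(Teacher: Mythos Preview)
Your proof is correct and follows essentially the same route as the paper: you verify the three ``gradient-like'' conditions H1--H3 of~\citep{attouch2013convergence} on the sublevel set $\{\Psi\le\Psi(x^0)\}$ (using Assumption~\ref{ass:second_ass} to get uniform strong convexity of $h$ and uniform Lipschitz constants for $\nabla h,\nabla F$ there), and then invoke the KL machinery, the only cosmetic difference being that you write out the finite-length argument of Step~4 explicitly whereas the paper simply cites~\cite[Theorem~6.2]{bolte2018first}. One small gloss worth tightening: when you write ``$h$ is $\mu$-strongly convex on $\mathrm{Conv}(S)$'' and ``$\omega\subset S\subset int\dom(h)$'', you should intersect $\mathrm{Conv}(S)$ with $\dom(h)$ (as Assumption~\ref{ass:second_ass}(i) only speaks of subsets of the domain) and justify that accumulation points cannot lie on $\partial\dom(h)$ via essential smoothness of $h$ together with the boundedness of $\nabla h$ on $S$ implied by Assumption~\ref{ass:second_ass}(ii) --- exactly the issue you flag in your ``Main obstacle'' paragraph.
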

We follow the proof given in \citep{bolte2018first} with some updates to adapt to our weaker Assumption~\ref{ass:second_ass}.
In particular, we show in Lemma~\ref{lem:gradient_like} that the sequence $(x_k)_{k\geq 1}$ is still "gradient-like" \emph{i.e.} it verifies the assumptions H1, H2 and H3 from \citep{attouch2013convergence}. 
Once these conditions are verified, the result directly follows from~\cite[Theorem 6.2]{bolte2018first}.

Let us first note that by coercivity of $\Psi$ and decrease of the iterates $\Psi(x_k)$ (see Proposition~\ref{prop:BPG_1}), the iterates remain in the  set
\begin{equation}
    \forall k \geq 1, \ \ x_k \in  C(x_0) = \{x \in \dom(h), \Psi(x) <  \Psi(x_0) \}.
\end{equation}

\begin{lemma} \label{lem:gradient_like}
    The sequence $(x_k)_{k\geq 1}$ satisfies the following three conditions:
    \begin{itemize}
        \item[\textbf{H1}] \textbf{(Sufficient decrease condition)}
        \begin{equation}
            \Psi(x_k) - \Psi(x_{k+1}) \geq a \norm{x_{k+1}-x_k}^2,
        \end{equation}
        \item[\textbf{H2}] \textbf{(Relative error condition)}
        $\forall k \geq 1$, there exists $\omega_{k+1} \in \partial \Psi(x_{k+1})$ such that
        \begin{equation}
            \norm{\omega_{k+1}} \leq b \norm{x_{k+1}-x_k},
        \end{equation}
        \item[\textbf{H3}] \textbf{(Continuity condition)}
        Any subsequence $(x^{k_i})$ converging towards $x^*$ verifies 
        \begin{equation}
            \Psi(x^{k_i}) \to \Psi(x^*).
        \end{equation}
    \end{itemize}
\end{lemma}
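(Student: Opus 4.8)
The plan is to verify the three conditions \textbf{H1}, \textbf{H2}, \textbf{H3} in turn, relying on three facts already in hand: the sufficient decrease $\Psi(x^k)-\Psi(x^{k+1})\geq(\tfrac{1}{\tau}-L)D_h(x^{k+1},x^k)$ established in the proof of Proposition~\ref{prop:BPG_1} (inequality~\eqref{eq:decrease}); the observation that, by coercivity of $\Psi$ together with the monotonicity of $(\Psi(x^k))$, \emph{all} iterates live in the fixed sublevel set $S=\{x\in\dom(h):\Psi(x)\leq\Psi(x_0)\}$; and Assumption~\ref{ass:second_ass}. The preliminary step I would carry out first is to show that $\overline{S}$ is a compact subset of $int\dom(h)$: indeed $S$ is bounded by coercivity, and $\nabla h$ — which for a Legendre function is defined only on $int\dom(h)$ — is Lipschitz, hence bounded and uniformly continuous, on $S$ by Assumption~\ref{ass:second_ass}(ii), so it extends continuously to $\overline{S}$; if $\overline{S}$ met $\partial\dom(h)$ the essential smoothness of $h$ would force $\norm{\nabla h}$ to blow up there, a contradiction. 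Consequently $K:=\text{Conv}(\overline{S})$ is a compact convex subset of $int\dom(h)$ (the interior of the convex set $\dom(h)$ is convex) containing every $x^k$, on which $h$ is $\mu$-strongly convex for some $\mu>0$ (Assumption~\ref{ass:second_ass}(i)) and $\nabla h,\nabla F$ are $L_h$- and $L_F'$-Lipschitz (Assumption~\ref{ass:second_ass}(ii)).

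For \textbf{H1}, I would combine~\eqref{eq:decrease} with the bound $D_h(x^{k+1},x^k)\geq\tfrac{\mu}{2}\norm{x^{k+1}-x^k}^2$, valid since the segment $[x^k,x^{k+1}]$ lies in $K$ where $h$ is $\mu$-strongly convex; this gives \textbf{H1} with $a=(\tfrac{1}{\tau}-L)\tfrac{\mu}{2}>0$ because $\tau L<1$. For \textbf{H2}, I would write the first-order optimality condition of the BPG subproblem~\eqref{eq:BPG2} at $x^{k+1}\in int\dom(h)$: as $h$ and $x\mapsto\langle x-x^k,\nabla F(x^k)\rangle$ are $\mathcal{C}^1$ there, the limiting-subdifferential sum rule yields $\xi_{k+1}\in\partial\mathcal{R}(x^{k+1})$ with $\xi_{k+1}+\nabla F(x^k)+\tfrac{1}{\tau}(\nabla h(x^{k+1})-\nabla h(x^k))=0$; since $F$ is $\mathcal{C}^1$ on $int\dom(h)$ one has $\partial\Psi(x^{k+1})=\nabla F(x^{k+1})+\partial\mathcal{R}(x^{k+1})$, so $\omega_{k+1}:=\nabla F(x^{k+1})-\nabla F(x^k)-\tfrac{1}{\tau}(\nabla h(x^{k+1})-\nabla h(x^k))\in\partial\Psi(x^{k+1})$, and the two Lipschitz bounds on $K$ give $\norm{\omega_{k+1}}\leq(L_F'+\tfrac{1}{\tau}L_h)\norm{x^{k+1}-x^k}$, i.e.\ \textbf{H2} with $b=L_F'+\tfrac{1}{\tau}L_h$.

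For \textbf{H3}, I would note that $(\Psi(x^k))$ is non-increasing and bounded below, hence converges to some $\Psi^\star$, so every subsequence satisfies $\Psi(x^{k_i})\to\Psi^\star$; it remains to identify $\Psi^\star$ with $\Psi(x^\star)$ when $x^{k_i}\to x^\star$. Lower semicontinuity gives $\Psi(x^\star)\leq\Psi^\star$, and since $x^\star\in\overline{S}\subset int\dom(h)\subseteq\dom(F)$ with $F$ finite there, $\mathcal{R}(x^\star)=\Psi(x^\star)-F(x^\star)$ is finite. By \textbf{H1}, $\norm{x^{k+1}-x^k}\to0$, so $x^{k_i+1}\to x^\star$ too; plugging the admissible test point $x=x^\star$ into the optimality inequality defining $x^{k_i+1}\in T_\tau(x^{k_i})$, namely $\mathcal{R}(x^{k_i+1})+\langle x^{k_i+1}-x^{k_i},\nabla F(x^{k_i})\rangle+\tfrac{1}{\tau}D_h(x^{k_i+1},x^{k_i})\leq\mathcal{R}(x^\star)+\langle x^\star-x^{k_i},\nabla F(x^{k_i})\rangle+\tfrac{1}{\tau}D_h(x^\star,x^{k_i})$, and letting $i\to\infty$ using the continuity of $\nabla F$ and of $(x,y)\mapsto D_h(x,y)$ on the compact $K\subset int\dom(h)$, I obtain $\limsup_i\mathcal{R}(x^{k_i+1})\leq\mathcal{R}(x^\star)$; with lower semicontinuity of $\mathcal{R}$ and continuity of $F$ on $int\dom(h)$ this yields $\Psi(x^{k_i+1})\to\Psi(x^\star)$, hence $\Psi^\star=\Psi(x^\star)$ and a fortiori $\Psi(x^{k_i})\to\Psi(x^\star)$. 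Once \textbf{H1}--\textbf{H3} hold, the convergence of $(x^k)$ to a critical point of $\Psi$ follows from the KL machinery exactly as in~\cite[Theorem 6.2]{bolte2018first}.

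The main obstacle is the boundary behaviour concealed in the preliminary step (and implicitly used for the uniform constants of \textbf{H1}--\textbf{H2} and for the admissibility of the test point in \textbf{H3}): unlike in~\citep{bolte2018first}, where $\dom(h)=\mathbb{R}^n$ and $h$ is globally strongly convex, Burg's entropy and the Poisson term degenerate at $\partial\dom(h)$, so the whole argument hinges on first proving that the sublevel set $S$ — and therefore the limit point $x^\star$ — stays uniformly away from the boundary; this is precisely what Assumption~\ref{ass:second_ass} is tailored to deliver, and making the ``$\overline{S}$ compact in $int\dom(h)$'' claim rigorous is the delicate part, everything else being a routine adaptation of~\citep{bolte2018first}.
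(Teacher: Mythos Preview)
Your proof is correct and mirrors the paper's argument closely: the same sufficient-decrease bound combined with strong convexity of $h$ on a bounded convex set for \textbf{H1}, the same subgradient $\omega_{k+1}=\nabla F(x^{k+1})-\nabla F(x^{k})+\tfrac{1}{\tau}(\nabla h(x^{k})-\nabla h(x^{k+1}))$ from the BPG optimality condition for \textbf{H2}, and the same comparison with the test point $x^\star$ in the BPG subproblem (the paper uses index $k_i-1$, you use $k_i$, which is immaterial) to obtain $\limsup\mathcal{R}\leq\mathcal{R}(x^\star)$ for \textbf{H3}. Your explicit preliminary step --- showing $\overline{S}\subset int\dom(h)$ via the essential smoothness of $h$ and the Lipschitz bound on $\nabla h$ --- is in fact more careful than the paper, which tacitly uses $x^\star\in int\dom(h)$ (for continuity of $F$ and of $D_h(x^\star,\cdot)$ in \textbf{H3}) without arguing it.
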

\begin{proof}

\textbf{H1 : Sufficient decrease condition}.
From \eqref{eq:decrease}, we get
\begin{equation}
\Psi(x_k) - \Psi(x_{k+1})  \geq \left(\frac{1}{\tau}-L\right)  D_h(x^{k+1},x^{k}) .
\end{equation}
Besides, $h$ is assumed to be strongly convex on any bounded convex subset of its domain. 
Furthermore, notice that $\text{Conv}(C(x_0)) \cap \dom(h)$ is a convex subset of $\dom(h)$ as intersection of convex sets. Therefore, there is $\sigma_h > 0$ such that 
\begin{equation}
    \forall x,y \in \text{Conv}(C(x_0)) \cap \dom(h), \ \ D_h(x,y) \geq \sigma_h \norm{x-y}^2 .
\end{equation}
With the convention $D_h(x,y) = + \infty$ if $x \notin \dom(h)$ or $y \notin int \dom(h)$, 
\begin{equation}
    \forall x,y \in \text{Conv}(C(x_0)), \ \ D_h(x,y) \geq \sigma_h \norm{x-y}^2 .
\end{equation}
As $\forall k \geq 1, x_k \in C(x_0)$, we get
\begin{equation}\label{eq:sufficient_decrease}
\Psi(x_k) - \Psi(x_{k+1}) \geq \sigma_h \left(\frac{1}{\tau}-L\right) \norm{x_{k+1}-x_k}^2,
\end{equation}
which proves (H1).

\textbf{H2 : Relative error condition}.
Given~\eqref{eq:BPG2}, the optimality condition for the update of $x_{k+1}$ is 
\begin{equation}
0 \in \partial \mathcal{R}(x^{k+1}) + \nabla F(x^{k}) + \frac{1}{\tau}(\nabla h(x^{k+1}) - \nabla h(x^{k})) .
\end{equation}
For 
\begin{equation}
\omega_{k+1} = \nabla F(x^{k+1}) - \nabla F(x^{k}) + \frac{1}{\tau}(\nabla h(x^{k}) - \nabla h(x^{k+1}))
\end{equation}
we have 
\begin{equation}
\omega_{k+1} \in \partial \Psi(x_{k+1}) = \partial \mathcal{R}(x^{k+1}) + \nabla F(x^{k+1})
\end{equation}
and 
\begin{equation}
\norm{\omega_{k+1}} \leq \norm{ \nabla F(x^{k+1}) - \nabla F(x^{k})} + \frac{1}{\tau} \norm{\nabla h(x^{k}) - \nabla h(x^{k+1})} .
\end{equation}
By assumption, $\nabla F$ and $\nabla h$ are Lipschitz continuous on $C(x_0) = \{ \Psi(x) < \Psi(x_0) \}$. As seen before, $\forall k \geq 1$, $x_k \in C(x_0)$. Thus, there is $b>0$ such that
\begin{equation}
\norm{\omega_{k+1}} \leq \norm{ \nabla F(x^{k+1}) - \nabla F(x^{k})} + \frac{1}{\tau} \norm{\nabla h(x^{k}) - \nabla h(x^{k+1})} \leq b \norm{x^{k+1}-x^k} .
\end{equation}

\textbf{H3 : Continuity condition}. Let a subsequence $(x^{k_i})$ converging towards $x^*$.
Using the optimality in the update of $x_{k}$ we have 
\begin{align}
&\mathcal{R}(x^{k}) + \langle x^{k}-x^{k-1}, \nabla F(x^{k-1}) \rangle 
 + \frac{1}{\tau} D_h(x^{k},x^{k-1}) \\
 &\leq \mathcal{R}(x^{*}) + \langle x^{*}-x^{k-1}, \nabla F(x^{k-1}) \rangle 
 + \frac{1}{\tau} D_h(x^{*},x^{k-1}) \\
&\Leftrightarrow \mathcal{R}(x^{k}) \leq \mathcal{R}(x^{*}) + \langle x^{*}-x^{k-1}, \nabla F(x^{k-1})  \rangle + \frac{1}{\tau} D_h(x^{*},x^{k-1}) -  \frac{1}{\tau} D_h(x^{k},x^{k-1}) .\label{eq:bouned_R}
\end{align}

From \eqref{eq:sufficient_decrease} and the fact that $(\Psi(x_k))_k$ converges, we have that $\norm{x^{k}-x^{k-1}} \to 0$. Thus $(x^{k_i-1})_i$ also converges towards $x^*$.
In addition, since $h$ is continuously differentiable,
$D_h(x^{*},x^{k-1}) = h(x^{*}) - h(x^{k-1}) - \langle \nabla h(x^{k-1}),x^{*}-x^{k-1} \rangle\to 0$. Passing to the limit in \eqref{eq:bouned_R}, we get 
\begin{equation}
\limsup_{i \to +\infty} \mathcal{R}(x^{k_i}) \leq \mathcal{R}(x^*) .
\end{equation}
By lower semicontinuity of $\mathcal{R}$ and continuity of $F$, we get the desired result:
\begin{equation}
    \mathcal{R}(x^{k_i}) + F(x^{k_i}) \to \mathcal{R}(x^*) + F(x^*) .
\end{equation}
\end{proof}

\paragraph{Backtracking}
The convergence actually requires to control the NoLip constant. In order to avoid small stepsizes, we adapt the backtracking strategy of~\cite[Chapter 10]{beck2017first} to the BPG algorithm.

Given $\gamma \in (0,1)$, $\eta \in [0,1)$ and an initial stepsize $\tau_0 > 0$,
the following backtracking update rule on~$\tau$ is applied at each iteration~$k$:
\begin{equation}
    \label{eq:backtracking_supp}
    \begin{split}
        \text{while } \ \ & \Psi(x_{k}) - \Psi(T_{\tau}(x_{k})) < \frac{\gamma}{\tau} D_h(T_{\tau}(x_{k}),x_k) ,  \quad \tau \longleftarrow \eta \tau.
    \end{split}
\end{equation}
\begin{proposition}
    \label{prop:backtracking1}
    At each iteration of the algorithm, the backtracking procedure (\ref{eq:backtracking_supp}) is finite and with backtracking, the convergence results of Proposition~\ref{prop:BPG_1} and Theorem~\ref{thm:BPG_2} still hold.
\end{proposition}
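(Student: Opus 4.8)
The plan is to prove two things: that the inner backtracking loop \eqref{eq:backtracking_supp} terminates after finitely many passes at each outer iteration, and that the accepted stepsizes $\tau_k$ all lie in a fixed compact interval $[\tau_{\min},\tau_0]\subset(0,\infty)$. Once these are in hand, the convergence statements follow by literally repeating the arguments behind Proposition~\ref{prop:BPG_1} and Theorem~\ref{thm:BPG_2} with the constants adjusted.

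First I would establish finiteness. Fix an outer iteration $k$ with $x_k\in int \dom(h)$. By Assumption~\ref{ass:first_ass}(v), $T_\tau(x_k)$ is nonempty and contained in $int \dom(h)$ for every $\tau>0$, so each pass of the loop is well-defined. The optimality conditions of the BPG update \eqref{eq:BPG2} combined with the bound $D_F(x,y)\le L\,D_h(x,y)$ on $\text{Conv}(\dom\mathcal{R})\cap int \dom(h)$ — which follows from Assumption~\ref{ass:conv_ass} exactly as in the derivation of \eqref{eq:decrease} — give, for every $\tau>0$,
\[
\Psi(x_k)-\Psi(T_\tau(x_k)) \ \ge\ \Big(\tfrac1\tau - L\Big)\,D_h\big(T_\tau(x_k),x_k\big).
\]
Since $\gamma\in(0,1)$, as soon as $\tau\le (1-\gamma)/L$ we have $\tfrac1\tau-L\ge\tfrac\gamma\tau$, so the while-condition of \eqref{eq:backtracking_supp} fails and the loop stops; as $\tau$ is multiplied by $\eta\in[0,1)$ at each pass, this occurs after at most $\lceil\log_\eta\!\big((1-\gamma)/(L\tau_0)\big)\rceil$ passes. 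Note that the NoLip constant $L$ never enters the algorithm — only its existence (Assumption~\ref{ass:conv_ass}) is used in the analysis.

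Next I would record a two-sided control of the accepted stepsize $\tau_k$. On one side $\tau_k\le\tau_0$, because $\tau$ is only ever decreased from $\tau_0$. On the other side, the loop exits at the first value of $\tau$ for which the while-condition fails, and the displayed inequality shows this is guaranteed whenever $\tau\le(1-\gamma)/L$; hence whenever $\tau_k<\tau_0$ the previous (rejected) value exceeds $(1-\gamma)/L$, so $\tau_k\ge\tau_{\min}:=\min\{\tau_0,\ \eta(1-\gamma)/L\}>0$. Moreover, by the very definition of the exit condition, $\tau_k$ satisfies the sufficient-decrease inequality $\Psi(x_k)-\Psi(x_{k+1})\ge\tfrac{\gamma}{\tau_k}D_h(x_{k+1},x_k)\ge0$.

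Finally I would re-run the two convergence proofs. From the last inequality, $(\Psi(x_k))$ is non-increasing and, being bounded below, converges (Proposition~\ref{prop:BPG_1}(i)); summing it and using $\tau_k\le\tau_0$ yields $\sum_k D_h(x^{k+1},x^k)\le\frac{\tau_0}{\gamma}\big(\Psi(x_0)-\inf\Psi\big)<\infty$ together with the $O(1/K)$ rate (Proposition~\ref{prop:BPG_1}(ii)). For Theorem~\ref{thm:BPG_2} I would check that Lemma~\ref{lem:gradient_like} still holds with $k$-independent constants: H1 follows from the sufficient-decrease inequality and strong convexity of $h$ on $\text{Conv}(C(x_0))\cap\dom(h)$ (Assumption~\ref{ass:second_ass}(i)), with $a=\gamma\sigma_h/\tau_0$; H2 follows from the optimality condition of \eqref{eq:BPG2} with stepsize $\tau_k$, which gives $\omega_{k+1}=\nabla F(x^{k+1})-\nabla F(x^k)+\tfrac1{\tau_k}\big(\nabla h(x^k)-\nabla h(x^{k+1})\big)\in\partial\Psi(x_{k+1})$ with $\norm{\omega_{k+1}}\le\big(L_F+L_h/\tau_{\min}\big)\norm{x^{k+1}-x^k}$ by Assumption~\ref{ass:second_ass}(ii) and the bound $\tau_k\ge\tau_{\min}$; and H3 is unchanged, the terms $\tau_{k-1}^{-1}D_h(\cdot,\cdot)$ in the continuity argument staying controlled since $\tau_{k-1}\in[\tau_{\min},\tau_0]$. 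Convergence to a critical point of $\Psi$ then follows from the abstract KL result exactly as in the proof of Theorem~\ref{thm:BPG_2}. The main — and only genuinely new — difficulty is securing the positive lower bound $\tau_{\min}$ on the accepted stepsizes, which is what allows all constants in H1--H3 to be chosen uniformly in $k$; it rests on the "theoretical" decrease estimate above holding for every $\tau>0$, hence on $T_\tau(x_k)\in int \dom(h)$ for all $\tau$ (Assumption~\ref{ass:first_ass}(v)) and on the convexity of $Lh-F$ on all of $\text{Conv}(\dom\mathcal{R})\cap int \dom(h)$.
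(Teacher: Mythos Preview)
Your argument is correct and follows the same route as the paper: use the NoLip decrease estimate to show that the while-loop terminates once $\tau\le(1-\gamma)/L$, then rerun the proofs of Proposition~\ref{prop:BPG_1} and Theorem~\ref{thm:BPG_2} with the backtracking sufficient-decrease inequality in place of \eqref{eq:decrease}. Your explicit derivation of the uniform lower bound $\tau_k\ge\tau_{\min}=\min\{\tau_0,\eta(1-\gamma)/L\}$ is in fact a useful addition, since it is precisely what makes the constant in H2 (and the $\tau_{k-1}^{-1}$ factors in H3) uniform in $k$ --- a point the paper leaves implicit when it says ``the rest of the proofs \dots\ are identical.''
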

\begin{proof}
    For a given stepsize $\tau$, we showed in equation~(\ref{eq:decrease}) that
    \begin{equation} \label{eq:old_decrease}
        \Phi(x_{k}) - \Phi(T_\tau(x_{k})) \geq \left( \frac{1}{\tau}-L\right) D_h(T_\tau(x_{k}),x_k).
    \end{equation}
    Taking $\tau < \frac{1-\gamma}{L}$, we get $\frac{1}{\tau}-L > \frac{\gamma}{\tau}$ so that
    \begin{equation}
        \label{eq:new_decrease}
        \Phi(x_{k}) - \Phi(T_\tau(x_{k})) > \frac{\gamma}{\tau} D_h(T_\tau(x_{k}),x_k).
    \end{equation}
    Hence, when $\tau < \frac{1-\gamma}{L}$, the sufficient decrease condition is satisfied and the backtracking procedure ($\tau \longleftarrow \eta \tau$) must end.
    
    Replacing the former sufficient decrease condition~\eqref{eq:old_decrease} with~\eqref{eq:new_decrease}, the rest of the proofs from Proposition~\ref{prop:BPG_1} and Theorem~\ref{thm:BPG_2} are identical.
\end{proof}

\subsection{Proof of Theorem~\ref{thm:BRED_poisson}}
\label{app:proof_BRED_poisson}

We recall the B-RED algorithm
\begin{equation} 
    \textbf{(B-RED)} \ \ \ \ \  x^{k+1} \in  T_{\tau}(x_k) = \argmin_{x \in \mathbb{R}^n} \{ i_C(x) + \langle x-x^k, \nabla F_{\lambda,\gamma}(x^k) \rangle 
 + \frac{1}{\tau} D_h(x,x^k) \} .
\end{equation}
It corresponds to the BPG algorithm~\eqref{eq:BPG} with $F=F_{\lambda,\gamma} = \lambda f + g_\gamma$ and $\mathcal{R}=i_C$.

Theorem~\ref{thm:BRED_poisson} is a direct application of~Proposition~\ref{prop:backtracking1} that is to say of the convergence results of Proposition~\ref{prop:BPG_1} and Theorem~\ref{thm:BPG_2} with backtracking. Given Assumptions~\ref{ass:general_ass} and~\ref{ass:complement_BRED}, we verify that Assumptions~\ref{ass:first_ass}, \ref{ass:conv_ass} and~\ref{ass:second_ass} are verified: 

\noindent{\it Assumption~\ref{ass:first_ass}}. $\mathcal{R} = i_C$ verifies 
$\dom(\mathcal{R}) \cap int \dom(h) = C \cap int \dom(h) \neq \emptyset$. Moreover $\mathcal{R}$ is semi-algebraic as the indicator function of a closed semi-algebraic set. $g_\gamma$ and $f$ being assumed semi-algebraic, 
$\Psi = F_{\lambda,\gamma} + i_C$ is semi-algebraic, and thus KL. $\Psi$ is also lower-bounded and coercive as $F$ is lower-bounded and coercive and $g_\gamma$ is lower-bounded. Finally, for $x\in int \dom(h)$, $T_\tau(x)$ is non-empty as $h + \lambda i_C$ is supercoercive. 

\noindent{\it Assumption~\ref{ass:conv_ass}}.
By summing convex functions, using Assumption~\ref{ass:general_ass}(iii) and Assumption~\ref{ass:complement_BRED}(ii), $L = \lambda L_f + L_\gamma$ verifies $Lh - (\lambda f + g_\gamma)$ convex on $\text{Conv}(\dom \mathcal{R}) \cap int \dom(h) = C \cap int \dom(h)$.

\noindent{\it Assumption~\ref{ass:second_ass}}.
As $g_\gamma$ is assumed to have globally Lipschitz continuous gradient, this follows directly from Assumption~\ref{ass:general_ass}(iv).

\subsection{Proof of Theorem~\ref{thm:BPnP_poisson}}
\label{app:proof_BPnP_poisson}

\begin{align} 
    \textbf{(B-PnP)} \ \ \ \ \ x^{k+1} = \mathcal{B}_\gamma \circ \nabla h^*(\nabla h - \lambda \nabla f)(x_k).
\end{align}

It corresponds to the BPG algorithm~\eqref{eq:BPG} with $F=\lambda f$ and $\mathcal{R}=\phi_\gamma$.

Theorem~\ref{thm:BPnP_poisson} is a direct application of the convergence results of Proposition~\ref{prop:BPG_1} and Theorem~\ref{thm:BPG_2}. 

We now denote ${\Psi = \lambda f + \psi_\gamma}$.
Given Assumptions~\ref{ass:general_ass}, we verify that Assumptions~\ref{ass:first_ass}, \ref{ass:conv_ass} and~\ref{ass:second_ass} are verified. 

\noindent{\it Assumption~\ref{ass:first_ass}}.
For $\mathcal{R} = \phi_\gamma$, we have $Im(\mathcal{B}_\gamma) \subset \dom(\phi_\gamma)$ and as 
for $y \in int \dom(h)$, $Im(\mathcal{B}_\gamma) \subset int \dom(h)$ (by equation~\ref{eq:prox_property}), we get $\dom(\mathcal{R}) \cap int \dom(h) \neq \emptyset$.
We assumed in Assumption~\ref{ass:general_ass} that $\phi_\gamma$ is semi-algebraic and thus KL. $\Psi$ is coercive because $f$ is coercive and $\phi_\gamma$ is lower-bounded. Finally, here the well-posedness of $T_\tau(x)$ is ensured via Proposition~\ref{prop:prox_bregman}. 

\noindent{ \it Assumption~\ref{ass:conv_ass}}.
This is the L-smad property of $f$ given by Assumption~\ref{ass:general_ass}.

\noindent{ \it Assumption~\ref{ass:second_ass}}.
This is directly given by Assumption~\ref{ass:general_ass}(iv).

\section{Application to Poisson Inverse Problems}

\subsection{Burg's entropy Bregman noise model}
\label{app:burg_noise}

For $h$ being the Burg's entropy~\eqref{eq:burgs}, the Bregman noise model~\eqref{eq:noise_model} writes for $x,y \in \mathbb{R}^n_{++}$ as
\begin{equation}
\begin{split}
     p(y|x) &= \exp\left( - \gamma D_h(x,y) + \rho(x) \right)  \\
     &= \exp(\rho(x)) \exp \Big(- \gamma (h(x)-h(y)-\langle \nabla h(y), x-y \rangle) \Big) \\
     &=  \exp(\rho(x)) \exp\left(- \gamma \Big(\sum_{i=1}^n -\log(x_i) + \log(y_i) - 1 + \frac{x_i}{y_i}\Big)\right) \\
     &= \exp(\rho(x) + n\gamma) \prod_{i=1}^n \left(\frac{x_i}{y_i}\right)^{\gamma} \exp\left(- \gamma \frac{x_i}{y_i}\right) .
\end{split}
\end{equation} 

\subsection{Inverse Gamma Bregman denoiser}
\label{app:app_denoiser}

\paragraph{Derivation of~\eqref{eq:charac_convexity}} We first derive here the condition for convexity of $\eta_\gamma := \psi_\gamma \circ \nabla h^*$ introduced in Section~\ref{ssec:Burgs_BSD}.

We have
\begin{equation} 
\nabla \eta_\gamma(x) = \nabla^2 h^*(x) . \nabla \psi_\gamma( \nabla h^*(x))
\end{equation}
and
\begin{equation} 
\nabla^2 \eta_\gamma(x) = (\nabla^2 h^*(x))^2 . \nabla^2 \psi_\gamma( \nabla h^*(x)) + \nabla^3 h^*(x) . \nabla \psi_\gamma( \nabla h^*(x))
\end{equation}
which gives
\begin{equation} 
\begin{split}
\nabla^2 \eta_\gamma(\nabla h(y)) &= (\nabla^2 h^*(\nabla h(y)))^2 . \nabla^2 \psi_\gamma(y) + \nabla^3 h^*(\nabla h(y)) . \nabla \psi_\gamma(y) .
\end{split}
\end{equation}
For Burg's entropy and $y = \nabla h^*(x) = -\frac{1}{x}$ with $x < 0$, this writes
\begin{equation} 
    \nabla^2 \eta_\gamma(x) = y^4 \nabla^2 \psi_\gamma(y) + 2 y^3 \text{Diag}(\nabla \psi_\gamma(y)) .
\end{equation}
Using
\begin{equation} 
\psi_\gamma(y) = -h(y) - g_\gamma(y) - 1,
\end{equation}
\begin{equation} 
\nabla \psi_\gamma(y) = - \nabla h(y) - \nabla g_\gamma(y) = \frac{1}{y} - \nabla g_\gamma(y)
\end{equation}
and
\begin{equation} 
\nabla^2 \psi_\gamma(y) = - \nabla^2 h(y) - \nabla^2 g_\gamma(y) = -\frac{1}{y^2} - \nabla^2 g_\gamma(y) , 
\end{equation}
we get
\begin{equation} 
\begin{split}
        \nabla^2 \eta_\gamma(x) &= y^2 \left( y^2 \nabla^2 \psi_\gamma(y) + 2 y Diag(\nabla \psi_\gamma(y)) \right) \\
        &= y^2 \left( -1 -y^2 \nabla^2 g_\gamma(y) + 2 - 2y \nabla g_\gamma(y) \right) \\
        &= y^2 \left( 1 - y^2 \nabla^2 g_\gamma(y) - 2y \nabla g_\gamma(y) \right) \\
         \frac{1}{y^4} \nabla^2 \eta_\gamma(x) &=  Diag \left( \frac{1 - 2y \nabla g_\gamma(y)}{y^2} \right) - \nabla^2 g_\gamma(y) .
\end{split} 
\end{equation}
For $\eta_\gamma$ to be convex, a necessary condition is that $\forall y \in \mathbb{R}^n_{++}$ and $d \in \mathbb{R}^n$
\begin{align} 
& \langle \nabla^2 \eta_\gamma(y) d, d \rangle \geq 0 \\
\text{i.e.} \quad &\langle \nabla^2 g_\gamma(y) d, d \rangle \leq \sum_{i=1}^n \left( \frac{1 - 2y \nabla g_\gamma(y)}{y^2} \right)_i d_i^2 .
\end{align}

\paragraph{Training details}
For $N_\gamma$, we use the DRUNet architecture from~\citep{zhang2021plug} with $2$ residual blocks at each scale and softplus activation functions. We condition the network $N_\gamma$ on $\gamma$ similarly to what is done for DRUNet. We stack to $3$ color channels of the input image an additional channel containing an image with constant pixel value equal to $1/\gamma$. We use the same training dataset as~\citep{zhang2021plug}. Training is performed with ADAM during 1200 epochs. The learning rate is initialized with learning rate $10^{-4}$ and is divided by $2$ at epochs $300$, $600$ and $900$.

\subsection{Convergence of B-RED and B-PnP algorithm for Poisson inverse problems}
\label{app:convergence_poisson}

In this section we verify that the Burg's entropy $h$ in~\eqref{eq:burgs} and the Poisson data likelihood $f$ defined in~\eqref{eq:poisson_f} verify the assumptions required for convergence of B-RED and B-PnP. We remind the expression of $f$ and $h$.
\begin{equation}
h(x) = - \sum_{i=1}^n \log(x_i),
\end{equation}
\begin{equation}
    f(x) =  \sum_{i=1}^m y_i \log\left(\frac{y_i}{ \alpha(Ax)_i}\right) + \alpha(Ax)_i - y_i,
\end{equation}
for $A \in \mathbb{R}^{m\times n}$. Note that as done in \cite{bauschke2017descent}, denoting $(a_i)_{1\leq i\leq n}$ the columns of $A$, we assume that $a_i \neq 0_m$ and $\forall 1 \leq j \leq m, \sum_{i=1}^n a_{i,j} > 0$ such that $Ax \in \mathbb{R}^m_{++}$ if $x \in \mathbb{R}^n_{++}$. This is verified for $A$ representing the blur with circular boundary conditions with the (normalized) kernels used in Section~\ref{ssec:deblurring}.

We first check \textbf{Assumptions~\ref{ass:general_ass}}. Verifying (i) and (ii) is straightforward. We now discuss the three other assumptions. 
\begin{itemize}
    \item[(iii)]
 It is shown in \cite[Lemma 7]{bauschke2017descent} that $f$ verifies the NoLip assumption, \emph{i.e.} $L_f h - f$ is convex on $\mathbb{R}^n_{++}$, for $L_f \geq \norm{y}_1$. $y$ stands for the Poisson degraded observation appearing in the definition of $f$. 
 \item[(iv)] First, $h$ is strongly convex everywhere on its domain except in $+\infty$. For $C$ a bounded subset of $\mathbb{R}^n_{++}$, as $\nabla^2 h(x) = \frac{1}{x^2}\id$, we have $\forall x \in C, \forall d \in \mathbb{R}^n, \langle \nabla^2 h(x) d, d \rangle > \frac{1}{\sup_{x \in C}x^2}\norm{d}^2$ indicating that $h$ is strongly convex on bounded subsets of $\mathbb{R}^n_{++}$. Second, $h$ and $f$ are Lipschitz continuous everywhere on $\mathbb{R}^n_{++}$ except close to $0$. For both B-RED and B-PnP $\Psi(x) \to +\infty$ when $x\to0$ and $\{ \Psi(x) \leq \alpha \}$ avoids the case $x \to 0$. 
 \item[(v)] We remind the parametrizations $\mathcal{B}_\gamma = \id - \nabla g_\gamma$ and  $g_\gamma(y) = \frac{1}{2}\norm{x-N_\gamma(x)}^2$ with a U-Net $N_\gamma$ (with softplus activations).  Using the fact that the composition and sum of semi-algebraic mappings are semi-algebraic mappings (\cite{attouch2013convergence, coste2000introduction}), we easily verify that $g_\gamma$ and $\mathcal{B}_\gamma$ are semi-algebraic.  We also assumed that $\phi_\gamma$ is semi-algebraic. We give more details now. From~\eqref{eq:eq_psi_phi_denoiser}, we have that
$\forall x \in \Im(\mathcal{B}_\gamma) = \dom(\phi_\gamma)$, 
\begin{equation}
    \phi_\gamma(x) = g_\gamma(y) -D_h(x,y), \ \ y \in \mathcal{B}_\gamma^{-1}(x)
\end{equation}
As shown in \cite[Corollary 2.9]{coste2000introduction}, the inverse image of a semi-algebraic set by a semi-algebraic mapping is a semi-algebraic set. The graph of $\phi_{\gamma}$ is then a semi-algebraic subset of $\mathbb{R}^n \times \mathbb{R}$ and $\phi_\gamma$ is semi-algebraic.

Eventually, $g_\gamma$ is positive and thus lower-bounded. Moreover we now prove that we have $\forall y \in \mathbb{R}^n$, $\phi_\gamma(y) \geq g_\gamma(y)$. If $y \notin int \dom(h)$, as $ \Im(\mathcal{B}_\gamma) \subset int \dom(h)$, $\phi_\gamma(y)=+\infty$, this is verified.  If $y \in \dom(h)$, as $D_h(y,y)=0$, we have
\begin{equation}
    \begin{split}
        \phi_\gamma(y) &= \phi_\gamma(y) + D_h(y,y) \\
        &\geq  \phi_\gamma(\mathcal{B}_\gamma(y)) + D_h(\mathcal{B}_\gamma(y),y) \\
        &= g_\gamma(y),
    \end{split}
\end{equation}
where the inequality comes from \eqref{eq:prox_property} and the last equality from~\eqref{eq:eq_psi_phi_denoiser}.
\end{itemize}

Second, we verify~\textbf{Assumption~\ref{ass:complement_BRED}} required for the convergence of B-RED.
\begin{itemize}
\item[(i)] $[0,R]^n$ is a non-empty closed, bounded, convex and semi-algebraic subset of $\mathbb{R}^n_{++}$.
\item[(ii)] With the parametrization $g_\gamma(y) = \frac{1}{2}\norm{x-N_\gamma(x)}^2$ with a neural network $N_\gamma$. $g_\gamma $ can be shown to have Lipschitz gradient (see~\citet[Appendix B]{hurault2021gradient} for a proof).  We can not show a global NoLip property for $g_\gamma$. However, as $\nabla g_\gamma$ is $Lip(g_\gamma)$-Lipschitz, we have $\forall x \in (0,R]^n$, $\forall d \in \mathbb{R}^n$,
\begin{equation}
    \langle \nabla^2 g_\gamma(x)d, d \rangle \leq Lip(g_\gamma) \norm{d}^2 \leq Lip(g_\gamma) R^2 \sum_{i=1}^n  \frac{d_i^2}{x_i^2} = Lip(g_\gamma) R^2 \langle \nabla^2 h(x)d, d \rangle,
\end{equation}
which proves that, for $L_\gamma = Lip(g_\gamma) R^2 $, $L_\gamma h - g_\gamma$ is convex on $(0,R]^n$.
\end{itemize}
    
\paragraph{B-PnP additional assumptions}
We finally discuss the additional assumptions required for the convergence of B-PnP in Theorem~\ref{thm:BPnP_poisson}.
\begin{itemize}
    \item {\it ${\Im(\mathcal{B}_\gamma) \subseteq \dom(h)} = \mathbb{R}^n_{++}$.} We train the denoiser $\mathcal{B}_\gamma$ to restore images in $[\epsilon,1]^n$ (with $\epsilon = 10^{-3}$), the denoiser is thus softly enforced to have its image in this range. In practice, we empirically verify during the iterations that we always get $x_k > 0$. 
    \item {\it$\psi_\gamma \circ \nabla h^*$ convex on $int \dom(h^*)$.} As shown in Appendix~\ref{app:app_denoiser}, a necessary condition for this convexity to hold is 
    \begin{align} 
    \langle \nabla^2 g_\gamma(y) d, d \rangle \leq \sum_{i=1}^n \left( \frac{1 - 2y \nabla g_\gamma(y)}{y^2} \right)_i d_i^2.
    \end{align}
    After training $g_\gamma$, we empirically verifies that the above condition holds for random $d\sim \mathcal{U}[0,1]^n$ and $y$ sampled from the validation dataset with random noise (Inverse Gamma or Gaussian noise) of different intensity.  We can then assume that the convexity condition is verified locally, around the image manifold. 
    \item {\it${\Im(\nabla h - \lambda \nabla f) \subseteq \dom(\nabla h^*)}$.} We now show that this condition is true if $\lambda \norm{y}_1 < 1$. For $x>0$
    \begin{equation}
        \nabla h(x) - \lambda \nabla f(x) = -\frac{1}{x} -\lambda \nabla f(x) = -\frac{x + x \lambda \nabla f(x)}{x} .
    \end{equation}
    Thus we need to verify that $\forall 1 \leq i \leq n$, 
    $1 + \lambda  x_i \nabla f(x)_i > 0$. For $f$ Poisson data-fidelity term, using $(Ax)_j = \sum_{k=1}^n a_{j,k}x_k$, we have $\forall 1 \leq i \leq n$,
    \begin{equation}
    \nabla f(x)_i = \sum_{j=1}^m - y_j \frac{a_{j,i}}{\sum_{k=1}^n a_{j,k}x_k} + \alpha a_{j,i} 
    \end{equation}
    and 
    \begin{equation}
    1 + \lambda  x_i \nabla f(x)_i 
    = 1 + \alpha \lambda \sum_{j=1}^m a_{j,i}x_i - \lambda \sum_{j=1}^m y_j \frac{a_{j,i}x_i }{\sum_{k=1}^n, a_{j,k}x_k} .
    \end{equation}
    We assumed that $A$ has positive entries and $\sum_{j=1}^m a_{j,i} = r_i > 0$. Therefore, using $0 \leq \frac{a_{j,i}x_i }{\sum_{k=1}^n, a_{j,k}x_k} < 1$, we get 
    \begin{equation}
    1 + \lambda  x_i \nabla f(x)_i \geq 1 - \lambda \norm{y}_1
    \end{equation}
    which is positive if $\lambda \norm{y}_1 < 1$.
    
    \item {\it The stepsize condition $\lambda L_f < 1$.} Using the NoLip constant proposed in \citep[Lemma 7]{bauschke2017descent} $L_f = \norm{y}_1$, the condition boils down to $\lambda \norm{y}_1 < 1$. The condition $\lambda \norm{y}_1<1$ is too restrictive in practice, as $\norm{y}_1$ could get very big, especially for large images. This is due to the fact that the NoLip constant $L_f \geq \norm{y}_1$ can be largely over-estimated. Indeed, in the proof of \cite[Lemma 7]{bauschke2017descent} as well as in the proof of the previous point, the upper bound 
    \begin{equation}
    \frac{a_{j,i}x_i }{\sum_{k=1}^n, a_{j,k}x_k} < 1
    \end{equation}
    can be very loose in practice. For B-RED this is not a problem, as we use automatic stepsize backtracking. However, for B-PnP backtracking is not possible as the stepsize is fixed. 
    
    In practice, we employ the following empirical procedure to adjust $\lambda$, reminiscent of stepsize backtracking. We first run B-PnP with $\lambda >0$ without restriction. We then empirically check that a sufficient decrease condition of the objective function is verified. If not, we decrease $\lambda$ until verification. 
\end{itemize}

\subsection{Experiments}
\label{app:exp_deblurring}

We give here more details and results on the evaluation of B-PnP and B-RED algorithms for Poisson image deblurring. We present in Figure~\ref{fig:kernels} the four blur kernels used for evaluation. Initialization is done with $x_0 = A^Ty$. The algorithm terminates when the relative difference between consecutive values of the objective function is less than $10^{-8}$ or the number of iterations exceeds $K = 500$.

\begin{figure}[ht]
    \centering
    \begin{tabular}{c c c c }
          \includegraphics[width=1.5cm]{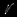} 
          & \includegraphics[width=1.5cm]{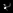}
          & \includegraphics[width=1.5cm]{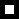}  
         & \includegraphics[width=1.5cm]{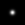}\\
          (a) & (b) & (c) & (d) 
    \end{tabular}
    \caption{The 4 blur kernels used for deblurring evaluation. (a) and (b) are real-world camera shake kernels from \cite{levin2009understanding}. (c) is a  $9 \times 9$ uniform kernel. (d) is a $25 \times 25$ Gaussian kernel with standard deviation $1.6$.}
    \label{fig:kernels}
\end{figure}

\paragraph{Choice of hyperparameters}
For B-RED, stepsize backtracking is performed with $\gamma = 0.8$ and $\eta=0.5$.

When performing plug-and-play image deblurring with our Bregman Score Denoiser trained with Inverse Gamma noise, for the right choice of hyperparameters $\lambda$ and $\gamma$, we may observe the following behavior. The algorithm first converges towards a meaningful solution. After hundreds of iterations, it can converge towards a different stationary point that does not correspond to a visually good reconstruction. We illustrate this behavior Figure~\ref{fig:prob} where we plot the evolution of the PSNR and of the function values $f(x_k)$ and $g_\gamma(x_k)$ along the algorithm.

\begin{figure}[ht]
    \centering
    \begin{subfigure}[b]{0.3\linewidth}
        \centering
        \includegraphics[height=2.5cm]{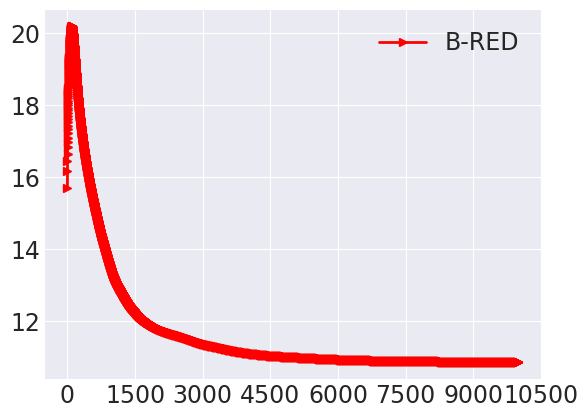}
        \caption{\footnotesize $\text{PSNR}(x_k)$}
    \end{subfigure}
    \begin{subfigure}[b]{0.3\linewidth}
        \centering
        \includegraphics[height=2.5cm]{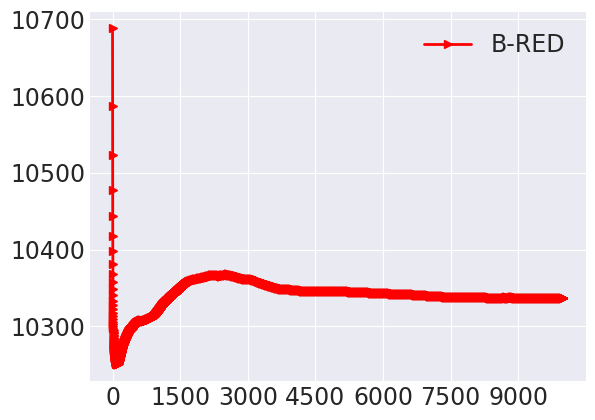}
        \caption{\footnotesize $f(x_k)$}
    \end{subfigure}
    \begin{subfigure}[b]{0.3\linewidth}
        \centering
        \includegraphics[height=2.5cm]{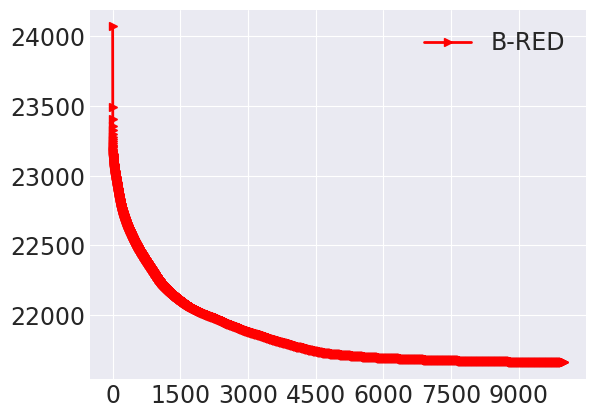}
        \caption{\footnotesize $g_\gamma(x_k)$}
    \end{subfigure}
    \caption{Evolution of the PSNR, $f(x_k)$ and $g_\gamma(x_k)$ when deblurring with B-RED with the initialization parameters from Table~\ref{tab:params} and without hyper-parameter update after $100$ iterations. We observe a first phase of fast decrease of both the data-fidelity term and regularization term values, resulting in a fast PSNR increase. After approximately $100$ iterations, the regularization continue decreasing and the iterates converge towards a different stationary point with low PSNR. }
    \label{fig:prob}
\end{figure}

This phenomenon can be mitigated by using small stepsize, large $\gamma$ and small $\lambda$ values at the expense of slowing down significantly the algorithm. 
To circumvent this issue, we propose to first initialize the algorithm with $100$ steps with initial $\tau$, $\gamma$ and $\lambda$ values and then to change this parameters for the actual algorithm. Note that it is possible for B-RED to change the stepsize $\tau$ but not for B-PnP which has fixed stepsize $\tau 1=$. For B-PnP, as done in~\cite{hurault2022proximal} in the Euclidean setting, we propose to multiply $g_\theta$ by a parameter $0 < \alpha < 1$ such that the Bregman Score denoiser becomes 
$\mathcal{B}^\alpha_\gamma(y) = y - \alpha(\nabla^2 h(y))^{-1}.  \nabla g_\gamma(y)$. The convergence of B-PnP with this denoiser follows identically. The overall hyperparameters $\lambda$, $\gamma$, $\tau$ and $\alpha$ for B-PnP and B-RED algorithms for initialization and for the actual algorithm are given in Table~\ref{tab:params}.

\begin{table}[ht]
    \centering
    \begin{tabular}{ c c c c c }
         & $\alpha$ & 20 & 40 & 60 \\
         \hline
         B-RED & Initialization $\tau = 1$ $\gamma = 50$ & $\lambda = 1.5$ & $\lambda = 2.$ & $\lambda = 2.5$ \\
         & Algorithm $\tau = 0.05$ $\gamma = 500$ & $\lambda = 0.5$ & $\lambda = 0.5$ & $\lambda = 0.5$ \\
         \hline
         B-PnP & Initialization $\alpha = 1$ $\gamma = 50$ & $\lambda = 1.5$ & $\lambda = 2.$ & $\lambda = 2.5$ \\
         & Algorithm $\alpha = 0.05$ $\gamma = 500$ & $\lambda = 0.025$ & $\lambda = 0.025$ & $\lambda = 0.025$ \\
    \end{tabular}
    \caption{B-RED and B-PnP hyperparameters}
    \label{tab:params}
\end{table}

\paragraph{Additional experimental results}
We provide in Table~\ref{tab:my_label} a quantitative comparison between our 2 algorithms B-RED and B-PnP and 3 other methods. (a) PnP-PGD corresponds to the plug-and-play proximal gradient descent algorithm $x_{k+1} = D_\sigma \circ (\id - \tau \nabla f)$ with $D_\sigma$ the DRUNet denoiser (same architecture than B-RED and B-PnP) trained to denoiser Gaussian noise. (b) PnP-BPG corresponds to the B-PnP algorithm $x^{k+1} = D_\sigma \circ \nabla h^*(\nabla h - \tau \nabla f)(x_k)$ with again the DRUNet denoiser $D_\sigma$ trained for Gaussian noise. For both (a) and (b) the parameters $\sigma$ and $\tau$ are optimized for each noise level $\alpha$.  (c) ALM Unfolded \citep{sanghvi2022photon} uses the Augmented Lagrangian Method for deriving a 3-operator splitting algorithm that is then trained specifically in an unfolded fashion for image deblurring with a variety of blurs and noise levels $\alpha$. The publicly available model being trained on grayscale images, for restoring our color images, we treat each color channel independently. 

Note that contrary to the proposed B-PnP and B-RED algorithms, the 3 compared methods do not have any convergence guarantees. We observe that our algorithms performs the best when the Poisson noise is not too intense ($\alpha = 40$ and $\alpha = 60$) but that PSNR performance decreases for intense noise ($\alpha = 20$). We assume that this is due to the fact that the denoising prior trained on Inverse Gamma noise is not powerful enough for such a strong noise. A visual example for $\alpha = 20$ is given Figure~\ref{fig:deblurring3}. As a future direction, we plan on investigating how to increase the regularization capacity of the deep inverse gamma noise denoiser to better handle  intense noise. 

\begin{table}[h]
    \centering
    \begin{tabular}{c c c c c}
    $\alpha$ & 20 & 40 & 60 \\
    \hline
    PnP-PGD & 23.81 & 24.41 & 24.45 \\
    PnP-BPG  & \bf23.85 & 24.26 & 24.71 \\
    ALM Unfolded \citep{sanghvi2022photon} & 23.39 & 23.91 & 24.22 \\
    B-RED  &  23.58 &  \bf24.54  & \bf 24.90  \\
    B-PnP & 23.29 & \bf 24.54 & 24.80
    \end{tabular}
    \caption{PSNR (dB) of Poisson deblurring methods on the CBSD68 dataset. PSNR averaged over the 4 blur kernels represented Figure~\ref{fig:kernels} for each noise levels $\alpha$.}
    \label{tab:my_label}
\end{table}

 \begin{figure*}[ht] \centering
    \captionsetup[subfigure]{justification=centering}
    \begin{subfigure}[b]{.22\linewidth}
            \centering
            \begin{tikzpicture}[spy using outlines={rectangle,blue,magnification=5,size=1.2cm, connect spies}]
            \node {\includegraphics[height=2.5cm]{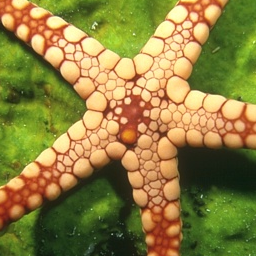}};
                \node at (-0.85,-0.85) {\includegraphics[scale=1.2]{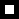}};
            \end{tikzpicture}
            \caption{Clean}
        \end{subfigure}
    \begin{subfigure}[b]{.22\linewidth}
            \centering
            \begin{tikzpicture}[spy using outlines={rectangle,blue,magnification=5,size=1.2cm, connect spies}]
            \node {\includegraphics[height=2.5cm]{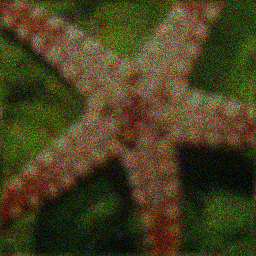}};
            \end{tikzpicture}
            \caption{Observed ($14.91$dB)}
        \end{subfigure}
    \begin{subfigure}[b]{.22\linewidth}
            \centering
            \begin{tikzpicture}[spy using outlines={rectangle,blue,magnification=5,size=1.2cm, connect spies}]
            \node {\includegraphics[width=2.5cm]{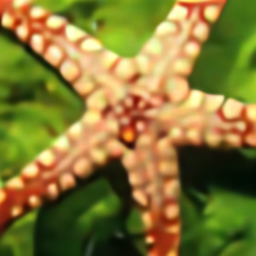}};
            \end{tikzpicture}
            \caption{B-RED ($22.44$dB) }
        \end{subfigure}
    \begin{subfigure}[b]{.22\linewidth}
            \centering
            \begin{tikzpicture}[spy using outlines={rectangle,blue,magnification=5,size=1.2cm, connect spies}]
            \node {\includegraphics[width=2.5cm]{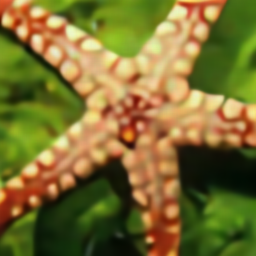}};
            \end{tikzpicture}
            \caption{B-PnP ($22.40$dB)}
        \end{subfigure}
   \caption{Deblurring from the indicated motion kernel and Poisson noise with $\alpha=20$.}
    \label{fig:deblurring3}
    \end{figure*}

\end{document}